\documentclass[12pt,a4paper]{article}
\usepackage{amsmath}
\usepackage{amssymb}
\usepackage{amsthm}
\usepackage{authblk}
\usepackage[font=footnotesize]{caption}
\usepackage{cite}
\usepackage{dsfont}
\usepackage{enumerate}
\usepackage[top=2.5cm, bottom=3cm, left=2.8cm, right=2.8cm]{geometry}
\usepackage{graphicx}
\usepackage[colorlinks=true,linkcolor=blue]{hyperref}
\usepackage[font=footnotesize]{subcaption}
\usepackage[capitalize]{cleveref}

\newcommand{\C}{\mathbb{C}}
\newcommand{\R}{\mathbb{R}}
\newcommand{\1}{\mathds{1}}
\newcommand{\cA}{\mathfrak{A}}
\newcommand{\cC}{\mathfrak{C}}
\newcommand{\cD}{\mathcal{D}}
\newcommand{\cH}{\mathcal{H}}
\newcommand{\cM}{\mathcal{M}}
\newcommand{\cO}{\mathcal{O}}
\newcommand{\x}{\mathbf{x}}
\newcommand{\y}{\mathbf{y}}
\newcommand{\z}{\mathbf{z}}
\newcommand{\g}{\mathbf{g}}
\newcommand{\bk}{\mathbf{k}}
\newcommand{\SU}{\mathrm{SU}}
\newcommand{\ii}{\mathrm{i}}

\AddToHook{cmd/appendix/before}{
  \setcounter{equation}{0}
  
}

\newtheoremstyle{theoremstyle}{3pt}{3pt}{}{}{\bfseries}{:}{0.5em}{}

\theoremstyle{theoremstyle}
\newtheorem{condition}{Condition}
\crefname{condition}{Condition}{Conditions}

\theoremstyle{theoremstyle}
\newtheorem{proposition}{Proposition}
\crefname{proposition}{Proposition}{Propositions}

\theoremstyle{theoremstyle}
\newtheorem{theorem}{Theorem}

\theoremstyle{theoremstyle}
\newtheorem{lemma}{Lemma}

\parskip=2mm
\parindent0mm

\def\be#1\ee{\begin{equation}#1\end{equation}}
\def\bea#1\eea{\begin{align}#1\end{align}}

\title {\Large \bf Necessary and sufficient conditions for correctness of complex Langevin} 
\author[1]{Michael Mandl\thanks{\href{mailto:michael.mandl@uni-graz.at}{michael.mandl@uni-graz.at}}}
\author[2]{Erhard Seiler\thanks{\href{mailto:ehs@mpp.mpg.de}{ehs@mpp.mpg.de}}}
\author[1]{Dénes Sexty\thanks{\href{mailto:denes.sexty@uni-graz.at}{denes.sexty@uni-graz.at}}} 
\affil[1]{Institute of Physics, NAWI Graz, University of Graz, Universit{\"a}tsplatz 5, 8010 Graz, Austria}
\affil[2]{Max-Planck-Institut f{\"u}r Physik (Werner-Heisenberg-Institut), Boltzmannstra{\ss}e 8,\newline 85748 Garching bei M{\"u}nchen, Germany}
\date{
\vspace{-0.7cm}\normalsize\today}

\begin{document}

\maketitle

We derive a family of correctness conditions for complex Langevin simulations. In particular, we show that if in a given theory the expectation
values of all observables within a particular space satisfy the theory's Schwinger--Dyson equations as well as certain bounds, then these expectation values are necessarily correct. In fact, these findings are not only valid in the context of complex Langevin simulations, but they also hold for general probability densities on complex manifolds, given an initial complex density on a real manifold. We stress that, while the proposed conditions are necessary and sufficient in a mathematical sense, their practical use is not to prove the correctness of obtained simulation results. Rather, they are mainly useful for detecting incorrect convergence. In particular, we test these criteria in a few simple one- and two-dimensional toy models and find that they are indeed capable of ruling out incorrect results without the need of exact solutions. 

\section{Introduction}\label{sec:introduction}
    The numerical sign problem remains among the most challenging problems in many-particle 
physics to this day. It prevents the numerical evaluation of integrals over highly 
oscillating functions with a large number of degrees of freedom with standard techniques. Such 
integrals arise, for instance, in the nonperturbative study of strongly interacting quantum
field theories with a nonzero density of fermions by means of first principle lattice 
simulations. Moreover, they are also prevalent in lattice approaches to the study of 
real-time quantum field theories, as well as in various other theories of interest in 
physics.

The complex Langevin method \cite{Par83,Kla83} can be used -- in principle -- in situations
where conventional lattice approaches cannot be applied due to the sign problem, thus 
providing a potential solution. For a theory whose path integral consists of the integration  of an oscillatory integrand $\rho$ over a real manifold $\cM_r$, the complex Langevin approach makes use of the stochastic evolution in a complexified
field space $\cM_c$. Effectively, it replaces $\rho$ by a real and nonnegative distribution $P$ on $\cM_c$, which thus allows 
for a probabilistic interpretation similar to the one used in usual Monte-Carlo approaches. 
The complex Langevin evolution is considered successful if and only if $P$ is equivalent to
$\rho$ in the sense that both produce the same expectation values on an appropriate space of 
holomorphic observables. The existence of such a probability density $P$ for a given
$\rho$ has been established quite generally \cite{Sal97,Wei02,Sal07}. What is less clear, 
however, is whether complex Langevin is capable of producing such $P$ for generic $\rho$. 
In fact, complex Langevin simulations may suffer from the infamous wrong convergence 
problem, meaning that they might fail to reproduce the correct expectation values, at least 
without additional modifications. 

Establishing the correctness of complex Langevin simulation results is thus of utmost
importance in order to ensure the reliability of the method. While there is a plethora
of potential correctness criteria to be found in the literature, as we shall discuss, none
of them appear to be entirely satisfactory and free of loopholes. In this work, we attempt 
to fill this gap by deriving a set of conditions that are both necessary and sufficient to 
ensure that a probability density $P$ on $\cM_c$ produces the same expectation values as a 
given complex density $\rho$ on $\cM_r$. The correctness criterion proposed here can thus be 
applied to complex Langevin results, but it is, in fact, more general than that. Indeed, it can be used in any situation where a complex density on a real manifold $\cM_r$ is traded for a probability density on the complexification of $\cM_r$, see, e.g., \cite{Wei02,Sal07,Wos15,SW17}.

This work is structured at follows: After introducing a number of useful definitions and 
terminology in \cref{sec:generalities}, we derive the new correctness condition in 
\cref{sec:criteria}. In \cref{sec:cle}, we introduce the complex Langevin approach, which we then use to test the correctness condition. The results of this are presented in \cref{sec:numerics}. Finally, in 
\cref{sec:realistic}, we provide a few ideas on how to employ the criterion in more 
realistic models than the ones considered here, before concluding in \cref{sec:summary}. 
The appendices contain some auxiliary results and derivations.
\section{Generalities}\label{sec:generalities}
    \subsection{Manifolds and densities}
We consider models defined on a real manifold $\cM_r$, for instance the configuration space of
a lattice gauge theory. The manifold $\cM_r$ generally is of the form
\be\label{eq:M_r}
\cM_r=\R^n\times G^m\;,
\ee
where $G$ is a compact Lie group such as $\SU(N)$. The case $n=0$ is interpreted as the 
absence of the factor $\R^n$, while $m=0$ is is interpreted as the absence 
of the factor $G^m$. On $\cM_r$, we consider the measure 
\be
d\mu\equiv d^nx\times d^mg\;,
\ee
i.e., the product of the Lebesgue measure $d^nx$ on $\R^n$ with the Haar measure $d^mg$ on $G^m$. $\cM_r$ has a complexification to a complex manifold $\cM_c$,
\be
\cM_c=\C^n\times G_c^m\;,
\ee
where $G_c$ is the complexification of $G$. We 
denote the arguments of functions on $\cM_r$ by $(\x,\g)$ and when complexified by 
$(\z,\g_c)$. Occasionally, for notational brevity we will omit the arguments of functions in this work when they are evident from the context.

The models we consider are defined by a complex density $\rho(\x,\g)$, which has a holomorphic
continuation to $\cM_c$. In many cases, $\rho$ is given in terms of an action $S(\x,\g)$ via
\be
\rho(\x,\g)\propto\exp(-S(\x,\g))\;.
\ee
We assume $\rho$ to be normalized, such that 
\be
\int_{\cM_r} \rho\,d\mu=1\;.
\ee
If $\cM_r$ is noncompact, i.e., $n>0$, we furthermore assume that 
\be
\rho_m(\x)\equiv \sup_{\g\in G^m} |\rho(\x,\g)|
\label{rho_m}
\ee
goes to zero at least as fast as a Gaussian for $|\x|\to\infty$, i.e.,
\be
\rho_m(\x)\le A \exp(-a\vert\x\vert^2)\;,
\label{gauss}
\ee
with some positive constants $A$ and $a$. This condition could likely be weakened, but it helps to keep the discussion as simple as possible.
In any case, the condition is always satisfied in the usual lattice models.

Another requirement that is met for most conventional models, which we also impose here, is that the first partial derivatives of $\log\rho$ with respect to both the coordinates and the group degrees of freedom (see below) grow at most like a polynomial in $\x$ for all $\g$. Finally, we assume that $\rho$ does not vanish on $\cM_r$:
\be
\rho(\x,\g)\neq 0\quad \forall\,(\x,\g)\in\cM_r\;. 
\ee
This is not really a restriction because by Cauchy's theorem
we can always deform the integration manifold $\cM_r$ away from any zeros, for instance by shifting by some vector with small purely imaginary components, without changing the expectation values of observables.
  
\subsection{Observables and Schwinger--Dyson operators}
Let us consider the (commutative) algebra $\cA$ of observables generated by the coordinates 
of $\R^n$ and the matrix elements\footnote{Notice that the matrix elements of $\g\in G$ are functions of $\g$, such that the observables in $\cA$ are really functions on $\cM_r$.} of the finite-dimensional representations of $G^m$ (note that in the case of $G=\SU(N)$
we only require the matrix elements of the fundamental representation as generators of the algebra).
All the elements $f\in\cA$ have analytic continuations to $\cM_c$, which 
we denote by the same letter $f$. The constant function with value $1$ is the unit 
element of $\cA$, denoted by $\1$. It turns out, however, that we actually require a somewhat larger space of observables for our purposes, as will be explained 
below.

It is well known that the correct expectation values of observables have to satisfy the 
so-called Schwinger--Dyson equations (SDEs), which are infinitely many identities 
characterizing $\rho$. The Schwinger--Dyson (SD) operators are differential operators 
acting on the observables, but -- in general -- they do not map $\cA$ into itself. To 
describe the SDEs in this general setting, we first need to establish a few definitions.

Since the functions $f\in\cA$ depend on two kinds of variables, $\x$ and $\g$, we also require
two kinds of derivatives. The first kind are simply the partial derivatives $\partial_i$
with respect to the coordinates $x_i$ of $\R^n$ or, after complexification, the coordinates
$z_i$ of $\C^n$. The second ones are derivatives with respect to the group variables. They 
are determined by choosing a basis
\be
\{X_j\,|\,j=1,\ldots, d \}\,,\quad d=\dim(G)
\ee
of the Lie algebra $\mathfrak g$ of $G$ and, correspondingly, a basis of the Lie algebra
$\mathfrak g^m$ as
\be
\{X_{j,k}\,|\,j=1,\ldots, d;\, k=1, \ldots, m \}\;.
\ee
With this, we define the derivations (usually referred to as left-derivatives)
\be
D_{j,k}f(\x,\g)\equiv \lim_{\epsilon\to 0} \frac{1}{\epsilon}
\big(f(\x,\exp(\epsilon X_{j,k})\g)-f(\x,\g)\big)
\ee
and extend their action to the analytic continuation of the functions  $f\in\cA$.
Corresponding to the two kinds of partial derivatives, there are two kinds of SD operators:
\be\label{eq:A_i}
A_i=\partial_{i} + \frac{\partial_{i}\rho}{\rho}
\ee
and
\be\label{eq:A_jk}
A_{j,k}=D_{j,k} + \frac{D_{j,k}\rho}{\rho}\;.
\ee
{\bf Remark:} The SD operators map $\cA$ into itself only in the case that the partial derivatives of $\log\rho$ are polynomials. More precisely, if 
$\rho$ vanishes somewhere on $\cM_c$, the action of the SD 
operators on $\cA$ will generically produce poles at the zeros of $\rho$. For this reason, as in \cite{SS19},  
we are forced to enlarge $\cA$ to
\be\label{eq:H}
\cH \equiv \cA+ \sum_iA_i \cA+ \sum_{j,k} A_{j,k} \cA\;.
\ee
It is not hard to see that all $f\in\cH$ are integrable with respect to $|\rho|d\mu$. In \cref{sec:density_zeros}, we shall discuss a specific example in which $\cA\neq\cH$.

Now, the ``correct expectation'' of a 
function $f\in \cH$ is defined by
\be\label{eq:rho_expectation}
\langle f \rangle_\rho \equiv \int_{\cM_r} f \rho\, d\mu\;,
\ee 
where
\be
\langle \1\rangle_\rho=1
\ee
due to the normalization of $\rho$. Indeed, $\langle f\rangle_\rho$ is invariant under continuous deformations or shifts of $\cM_r$, owing to the holomorphicity of both $\rho$ and $f$.

\subsection{Correctness of probability measures on \texorpdfstring{$\cM_c$}{Mc}}\label{sec:measure_correctness}
Let $dP$ be a probability measure on $\cM_c$, which may (but does not have to) be the 
equilibrium measure of a complex Langevin process with or without a kernel (see \cref{sec:cle}).

{\bf Definition:} We say that the probability measure $dP$ is {\it correct} if and only if
\be
\int_{\cM_c} f dP = \langle f \rangle_\rho\quad \forall f\in\cH\;.
\label{correct}
\ee
{\bf Remark:} If $\rho$ vanishes somewhere on $\cM_c$, $dP$ can only be correct if it also
vanishes there, because otherwise the singularities in $f\in\cH$ would cause the left-hand side of (\ref{correct}) to diverge.

We factorize $\rho$ as
\be\label{eq:rho_split}
\rho(\x,\g)=w(\x) \rho_r(\x,\g)\;, 
\ee
such that both factors are continuous, $w>0$, and
\be
\Vert f\Vert_{w}\equiv \sup_{\cM_r} |f(\x,\g)|w(\x)<\infty  \quad \forall f\in\cH \;,
\label{norms}
\ee 
as well as 
\be 
C\equiv \int_{\cM_r} |\rho_r(\x,\g)| d\mu<\infty\;. 
\label{integrable} 
\ee 
Note that the factor $w$ is independent of the group variables $\g$.

To be clear, since we assume the partial derivatives of $\log\rho$ to grow at most polynomially, these conditions require that $w(\x)$ decays faster than any power of $\vert\x\vert$, while (\ref{integrable}) must hold at the same time. In fact,
for reasons stated below, we actually require $w(\x)$ to decay at least like a Gaussian, i.e., 
\begin{equation}\label{eq:w_decay}
    w(\x) \leq B\exp(-b\vert\x\vert^2)\;,
\end{equation}
with positive constants $B$ and $b$, similar to \eqref{gauss}. A simple choice 
for $w$ is 
\be 
w(\x)\equiv \rho_m(\x)^\alpha\;,\quad 
\rho_r(\x,\g)\equiv \rho(\x,\g)/w(\x)\;, \quad 0<\alpha<1\;, 
\label{w_choice}
\ee 
where $\rho_m$ was defined in \eqref{rho_m}. Note that this implies
\be 
\sup_{\g\in G^m} |\rho_r(\x,\g)|=w(\x)^{1/\alpha-1}\;. 
\ee 
With this choice, $w(\x)$ indeed decays like a Gaussian due to \eqref{gauss} and is continuous, as is shown in \cref{app:continuity}.

\section{Correctness conditions}\label{sec:criteria}
    With the definitions introduced in the previous section, we are now in the
position to formulate two conditions, the combination of which is both necessary and sufficient for ensuring the correctness of the probability measure $dP$. We proceed by proving the necessity and sufficiency in turn.

\subsection{Two necessary conditions for correctness}

\begin{condition}\label{cond:1} If $dP$ is correct, i.e., \eqref{correct} holds, it satisfies the SDEs
\be
\int_{\cM_c} (A_i f)dP=0\;,\quad i=1,\ldots,n\quad \forall f\in\cA
\label{sde1}
\ee
and
\be
\int_{\cM_c} (A_{j,k} f)dP=0,\quad j=1,\ldots,d;\quad k=1,\ldots, m \quad \forall f\in\cA\;.
\label{sde2}
\ee
\end{condition}
To prove \eqref{sde1} and \eqref{sde2}, we only have to integrate the right-hand side of \eqref{correct} (i.e., \eqref{eq:rho_expectation}) by parts, which under our
assumptions can be carried out without the appearance of boundary terms, and use the definitions of $A_i$ and $A_{j,k}$ in \eqref{eq:A_i} and \eqref{eq:A_jk}. Note that the validity of \eqref{correct} for all  $f\in\cH$ implies the validity of \eqref{sde1} and \eqref{sde2} only for $f\in\cA$.

The second condition is another straightforward consequence of \eqref{correct}, using the norm $\Vert f\Vert_w$ in 
\eqref{norms} and the constant $C$ defined in \eqref{integrable}:

\begin{condition}\label{cond:2}If $dP$ is correct, i.e., it satisfies \eqref{correct}, the 
following bounds hold for all $f\in\cH$: 
\be 
\left|\int_{\cM_c} f dP\right| \le C \Vert f\Vert_w\;.
\label{bounds}
\ee 
\end{condition}
\begin{proof} 
Using \eqref{correct} and \eqref{eq:rho_split}, we have
\begin{equation}
\left|\int_{\cM_c} f(\z,\g_c) dP\right|
= \left|\int_{\cM_r} f(\x,\g) w(\x)\rho_r(\x,\g) d\mu\right|\;,
\end{equation}
for which we can find a trivial upper bound, namely
\begin{equation}
\left|\int_{\cM_c} f(\z,\g_c) dP\right|
\leq
\left(\int_{\cM_r} \vert\rho_r(\x,\g)\vert d\mu\right) \sup_{\cM_r}  \vert f(\x,\g)\vert w(\x)= C\Vert f\Vert_w\;,
\end{equation}
which is precisely \eqref{bounds}. 
\qedhere
\end{proof}

\subsection{Sufficiency of the necessary conditions}

We now go on to show that \cref{cond:1,cond:2} in combination are also sufficient to guarantee correctness, based on our earlier assumption that
$\rho$ does not vanish anywhere on $\cM_r$.  
To this end, let us assume that \eqref{bounds} holds. This means that the linear functional $T_P$ on $\cH$,
defined by
\be\label{eq:T_P}
(T_P,f)\equiv \int_{\cM_c} f dP\;,
\ee
is bounded in the norm $\Vert\cdot\Vert_w$, or, equivalently, that the linear functional $\tilde T_P$ on
$\tilde\cH\equiv w\cH$, defined by
\be\label{eq:tilde_T_P}
(\tilde T_P,\tilde f)\equiv (T_P,f)\quad  {\rm with} \quad \tilde f=wf\;,
\ee
satisfies
\be
\big\vert(\tilde T_P,\tilde f)\big\vert \le C \Vert\tilde f\Vert_\infty\quad \forall \tilde f\in\tilde \cH\;,
\quad \Vert\tilde f\Vert_\infty\equiv \sup_{\cM_r}\big\vert\tilde f(\x,\g)\big\vert\;.
\label{inf_bound}
\ee
Note that $\tilde \cH \subset\cC_0(\cM_r)$, where $\cC_0(\cM_r)$ 
is the space of all continuous functions on $\cM_r$ vanishing at infinity. This is due to our requirements that $\rho$ does not vanish on $\cM_r$ and that any possible large growth of $f$ at infinity is counteracted by the fast decay of $w$. For the latter to be true, it was necessary to restrict the growth of the partial derivatives of $\log\rho$ earlier.

A crucial fact is the following

\begin{proposition}\label{prop:1} Any linear functional $\tilde{T}$ on $\tilde\cH\subset \cC_0(\cM_r)$ that is bounded 
as in \eqref{inf_bound} can be extended to a linear functional on all of $\cC_0(\cM_r)$ with the same bound.
\end{proposition}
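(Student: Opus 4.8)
The plan is to recognize this statement as a direct instance of the Hahn--Banach extension theorem and to cast it in the standard functional-analytic framework. First I would record the ambient structure: $\cC_0(\cM_r)$, equipped with the supremum norm $\Vert\cdot\Vert_\infty$, is a complex Banach space, and $\tilde\cH=w\cH$ is a linear subspace of it. The subspace property is inherited from the vector-space structure of $\cH$ together with the linearity of multiplication by the fixed weight $w$, and the inclusion $\tilde\cH\subset\cC_0(\cM_r)$ has already been argued in the surrounding text, using that $\rho$ (hence $w$) is nonvanishing and decays at least Gaussianly while the elements of $\cH$ grow at most polynomially.

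Next I would introduce the seminorm $p(\tilde f)\equiv C\Vert\tilde f\Vert_\infty$ on $\cC_0(\cM_r)$. Since $C>0$ is a fixed constant and $\Vert\cdot\Vert_\infty$ is a norm, $p$ is genuinely a seminorm (in fact a norm). The hypothesis \eqref{inf_bound} says precisely that
\be
\big\vert(\tilde T,\tilde f)\big\vert\le p(\tilde f)\quad\forall\,\tilde f\in\tilde\cH\;,
\ee
so that the linear functional $\tilde T$ is dominated by $p$ on the subspace $\tilde\cH$. With this in place, the complex version of the Hahn--Banach theorem applies verbatim: a linear functional on a subspace of a complex vector space whose modulus is bounded by a seminorm extends to a linear functional on the entire space subject to the same bound. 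Invoking it produces an extension defined on all of $\cC_0(\cM_r)$ satisfying $\vert(\tilde T,\tilde g)\vert\le C\Vert\tilde g\Vert_\infty$ for every $\tilde g\in\cC_0(\cM_r)$, which is the assertion.

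I do not expect a substantial obstacle once the problem is framed this way, since the argument is essentially a textbook application. The only points that genuinely require care are, first, that one must use the complex form of Hahn--Banach (bounding the modulus by a seminorm) rather than the real sublinear-domination form, because the functionals here are $\C$-valued; and second, that the bound must be preserved with exactly the same constant $C$, which is automatic in the norm-preserving version of the theorem. I would also remark that the extension is generally not unique, but uniqueness is irrelevant here: the subsequent argument only needs the \emph{existence} of some norm-preserving extension in order to represent $\tilde T$ by a measure via the Riesz representation theorem for $\cC_0(\cM_r)$.
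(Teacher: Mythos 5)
Your proposal is correct and takes essentially the same route as the paper, which proves the proposition by a direct appeal to the Hahn--Banach theorem (citing Folland, Theorem 5.7, or Rudin, Theorem 3.3) without further elaboration. Your added care---checking that $\tilde\cH$ is a subspace of $\cC_0(\cM_r)$, using the complex (seminorm-domination) form of the theorem, and noting that non-uniqueness of the extension is harmless---simply makes explicit what the paper leaves to the cited references and to its surrounding remarks.
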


\begin{proof} This follows directly from the \emph{Hahn--Banach} theorem; see, e.g., \cite{Fol99} (Theorem 5.7) or \cite{Rud91} (Theorem 3.3).
\qedhere \end{proof}
 
The extension given by this theorem, which we denote by the same symbol $\tilde{T},$ may not be unique, but we will in the end only be
interested in its action on $\tilde\cH$ anyway.
The usefulness of this extension comes from another abstract result, known as the {\it 
Riesz--Markov} theorem; see, e.g., \cite{Rud87} (Theorem 6.19). Adapting the notation to our case, it says

{\bf Theorem} (Riesz--Markov): {\it If $X$ is a locally compact Hausdorff space, then every 
bounded linear functional $\tilde T$ on $\cC_0(X)$ is represented by a unique regular  complex Borel measure $\nu$ in the sense that $(\tilde T,\tilde{f})=\int_X \tilde{f} d\nu$ for every $\tilde{f}\in 
\cC_0(X)$. Moreover, $\nu$ is of finite total variation, i.e.,
$\Vert\tilde T\Vert\equiv \int_X d|\nu|<\infty$}. 

In our context, $\cM_r$ is locally compact as it is of finite dimension by assumption. Also note that
a Borel measure $d\nu$ on $\cM_r$ always represents a generalized function (actually a tempered distribution, see \cite{RS72}) $\tilde \sigma$, so we can write 
\be
d\nu= \tilde\sigma(\x,\g)d\mu\;.
\ee
Hence, we obtain

\begin{proposition}\label{prop:2}
Any bounded linear functional $\tilde T$ on $\cC_0(\cM_r)$ is given by a 
complex measure, which we represent by a distributional
density  $\tilde\sigma$ on $\cM_r$ as follows:
\be
(\tilde T,\tilde f)=\int_{\cM_r} \tilde\sigma({\x,\g}) \tilde f({\x,\g})\,d\mu\;,\quad \tilde{f} \in \cC_0(\cM_r)\;,
\label{measure}
\ee
where $\tilde\sigma$ satisfies
\be
\int_{\cM_r} |\tilde\sigma| d\mu <\infty\;.
\ee
\end{proposition}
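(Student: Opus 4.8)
The plan is to read off Proposition~\ref{prop:2} as a direct translation of the Riesz--Markov theorem into the language of densities. First I would invoke Riesz--Markov with $X=\cM_r$, which is legitimate precisely because $\cM_r=\R^n\times G^m$ is a finite-dimensional manifold and hence locally compact Hausdorff, as already noted. This furnishes a unique regular complex Borel measure $\nu$ with $(\tilde T,\tilde f)=\int_{\cM_r}\tilde f\,d\nu$ for every $\tilde f\in\cC_0(\cM_r)$, together with the bound $\int_{\cM_r}d|\nu|=\Vert\tilde T\Vert<\infty$ on its total variation.

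The second step is to re-express $\nu$ through the fixed reference measure $d\mu$. As recalled in the excerpt, a complex Borel measure of finite total variation on $\cM_r$ represents a (tempered) generalized function, so I would take $\tilde\sigma$ to be the density it defines relative to $d\mu$, writing $d\nu=\tilde\sigma\,d\mu$. Substituting this identity into the Riesz--Markov representation gives exactly \eqref{measure}. The integrability claim $\int_{\cM_r}|\tilde\sigma|\,d\mu<\infty$ is then nothing but the finite-total-variation statement $\int_{\cM_r}d|\nu|<\infty$ rewritten in terms of the density, so it is already supplied by the theorem and requires no separate argument.

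The one genuine subtlety, and the step I expect to demand the most care, is the precise meaning of $\tilde\sigma$. Since $\nu$ need not be absolutely continuous with respect to $d\mu$ (it may, for instance, contain a singular Dirac-type contribution), $\tilde\sigma$ cannot in general be promoted to an honest $L^1(d\mu)$ function; it must be read as a generalized function, with $\int_{\cM_r}|\tilde\sigma|\,d\mu$ understood as the total variation $|\nu|(\cM_r)$. Keeping this distributional interpretation explicit is what renders both the identity \eqref{measure} and the accompanying bound rigorous rather than merely formal. Everything beyond this is a transcription of the two cited abstract results, so the proof should be correspondingly short.
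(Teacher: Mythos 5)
Your proposal matches the paper's own argument essentially verbatim: both invoke the Riesz--Markov theorem on the locally compact space $\cM_r$ to obtain a regular complex Borel measure $\nu$ of finite total variation, and then rewrite $d\nu=\tilde\sigma\,d\mu$ with $\tilde\sigma$ understood as a generalized function (tempered distribution), the bound $\int_{\cM_r}|\tilde\sigma|\,d\mu<\infty$ being the total-variation statement in disguise. Your explicit remark that $\nu$ need not be absolutely continuous with respect to $d\mu$, so that $\tilde\sigma$ must be read distributionally, is exactly the caveat the paper itself makes when it later insists that derivatives of $\sigma_P$ be taken in the distributional sense.
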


Restricting to the subset $\tilde{\cH}\subset\cC_0(\cM_r)$ and applying this proposition to the functional $\tilde T_P$ that is generated by the probability measure
$dP$ according to \eqref{eq:tilde_T_P} and \eqref{eq:T_P} and that we have just shown to be bounded, we find that it is represented by a complex measure $\tilde \sigma_P$ of bounded total 
variation, 
i.e.,
\be
(\tilde T_P,\tilde f)= \int_{\cM_r} \tilde\sigma_P({\x,\g})\tilde f({\x,\g})\,d\mu\;, 
\quad \tilde f = wf\;, \quad f\in \cH\;.
\label{compl_meas}
\ee
To sum up, we have obtained

\begin{proposition}\label{prop:3}
Let $dP$ be a probability measure on $\cM_c$, satisfying for all $f\in\cH$
\be
\biggl |\int_{\cM_c} f(\z,\g_c) dP \biggr |\le C \Vert f\Vert_w\;,
\ee
with $\Vert f\Vert_w$ and $C$ defined in \eqref{norms} and \eqref{integrable}, respectively, then, again for all $f\in\cH$, we have
\be
\int_{\cM_c} f(\z,\g_c)dP  =\int_{\cM_r}\tilde\sigma_P({\x,\g}) w(\x)f({\x,\g})d\mu\;,
\label{cond_cl}  
\ee
with $\tilde\sigma_P$ of finite total variation.
\end{proposition}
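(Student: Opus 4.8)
The plan is to recognize that \eqref{cond_cl} is simply a representation of the left-hand side as a bounded linear functional, so the proof amounts to assembling the abstract results already in place. First I would package the quantity $\int_{\cM_c} f\,dP$ as the linear functional $T_P$ on $\cH$ of \eqref{eq:T_P}, and transport it to the rescaled space $\tilde\cH = w\cH$ through the substitution $\tilde f = wf$, obtaining $\tilde T_P$ as in \eqref{eq:tilde_T_P}. Since $\Vert\tilde f\Vert_\infty = \Vert f\Vert_w$ by the definition \eqref{norms}, the hypothesis $\bigl|\int_{\cM_c} f\,dP\bigr| \le C\Vert f\Vert_w$ is exactly the statement that $\tilde T_P$ is bounded in the sup-norm on $\tilde\cH$ with constant $C$, i.e.\ \eqref{inf_bound}.

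The second step is to extend this functional to all continuous functions vanishing at infinity and to represent it by a measure. Using the inclusion $\tilde\cH \subset \cC_0(\cM_r)$, I would apply the Hahn--Banach extension of \cref{prop:1} to extend $\tilde T_P$ to all of $\cC_0(\cM_r)$ without enlarging its bound, and then invoke the Riesz--Markov representation of \cref{prop:2}: as $\cM_r$ is locally compact Hausdorff, the extended functional is given by integration against a unique regular complex Borel measure of finite total variation, which I write as a density $\tilde\sigma_P\,d\mu$. Restricting \eqref{measure} back to the subspace $\tilde\cH$ and undoing the substitution $\tilde f = wf$ then reproduces \eqref{cond_cl} verbatim, with the finite total variation of $\tilde\sigma_P$ inherited directly from the Riesz--Markov theorem.

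The one genuinely non-routine point --- and the step I would check most carefully --- is the inclusion $\tilde\cH \subset \cC_0(\cM_r)$, since the applicability of both abstract theorems rests on it. Continuity of each $\tilde f = wf$ follows from the continuity of $w$ (ensured for the choice \eqref{w_choice} by the appendix) together with the continuity of $f \in \cH$ on $\cM_r$; here the assumption that $\rho$ has no zeros on $\cM_r$ is essential, because the elements of $\cH$ contain terms such as $A_i f$ involving $\partial_i\rho/\rho$, which would otherwise develop poles. Vanishing at infinity is only at stake along the noncompact directions $\R^n$, and this is precisely where the Gaussian decay \eqref{eq:w_decay} of $w$ must overpower the at-most-polynomial growth of $f$ that was secured by bounding the growth of the derivatives of $\log\rho$; the compact factor $G^m$ poses no difficulty. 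Once this inclusion is confirmed, the rest is a direct quotation of \cref{prop:1,prop:2}, requiring no further estimates.
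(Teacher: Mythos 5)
Your proposal is correct and takes essentially the same route as the paper: it passes from $T_P$ to the rescaled functional $\tilde T_P$ on $\tilde\cH=w\cH$, identifies the hypothesis with the sup-norm bound \eqref{inf_bound}, and then applies the Hahn--Banach extension of \cref{prop:1} followed by the Riesz--Markov representation of \cref{prop:2}, undoing the substitution $\tilde f = wf$ at the end. Your emphasis on verifying the inclusion $\tilde\cH\subset\cC_0(\cM_r)$ --- via the nonvanishing of $\rho$ on $\cM_r$ and the Gaussian decay \eqref{eq:w_decay} of $w$ against the at-most-polynomial growth of elements of $\cH$ --- mirrors exactly the paper's own justification.
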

Defining 
\be\label{eq:sigma_P}
\sigma_P(\x,\g)\equiv w(\x) \tilde\sigma_P(\x,\g)\;,
\ee
this can be rewritten as
\be
\int_{\cM_c} f(\z,\g_c)dP  =\int_{\cM_r}\sigma_P({\x,\g})f({\x,\g})d\mu
\label{cond_cl2}\;.  
\ee

In other words, if the probability measure $dP$ on $\cM_c$ is tested only against functions 
in $\cH$, it can be replaced by $\sigma_P$, which is concentrated on $\cM_r$. 
The density $\sigma_P$ decays at least like a Gaussian in $|\x|$ (since $w(\x)$ has the same property and $\tilde{\sigma}_P$ is of finite total variation) for all $\g$. 
We now show that \cref{cond:1} (i.e., the SDEs \eqref{sde1} and \eqref{sde2}) together with \cref{cond:2} (i.e., the bounds \eqref{bounds}) implies 
correctness.

\begin{proof}
Integration by parts of \eqref{sde1} and \eqref{sde2}, after using \eqref{cond_cl2}, rolls the respective SD operators over to
$\sigma_P$, changing them into their transposes as
\be
\int_{\cM_r} f({\x,\g}) A_i^T \sigma_P(\x,\g)\,d\mu=0
\quad \forall f\in\cA\;, \quad i=1,\ldots, n
\label{int_byp1}
\ee
and
\be
\int_{\cM_r} f({\x,\g}) A_{j,k}^T \sigma_P(\x,\g)\,d\mu=0
\quad  \forall f\in\cA\;, \quad j=1,\ldots, d;\, k=1,\ldots,m\;,
\label{int_byp2}
\ee
where
\be
A_i^T=- \partial_{i} + \frac{\partial_{i}\rho}{\rho}
\qquad
\textnormal{and} \qquad
A_{j,k}^T=-D_{j,k}+\frac{D_{j,k}\rho}{\rho}\;.
\ee
Note that under our conditions, there are once again no boundary terms here. Up to now, we have treated $\sigma_P$ as a distribution, in fact one of order zero \cite{Rud91} according to the bounds \eqref{bounds}. $\sigma_P$ therefore represents a complex (Borel) measure by the Riesz-Markov theorem.
But we have not yet addressed the question whether it is a differentiable function, so the derivatives in \eqref{int_byp1} and \eqref{int_byp2} have to be understood in the distributional sense.

We would like to conclude from \eqref{int_byp1} and \eqref{int_byp2} that (for all $i$, $j$, and $k$)
\be\label{eq:sde_meas}
A_i^T \sigma_P=0 \quad \textnormal{and} \quad A_{j,k}^T \sigma_P=0\;.
\ee
These differential equations then would imply that $\sigma_P$ is really a differentiable function. Indeed, this will turn out to be true because there are enough functions in $\cA$ such that a distribution $T$ 
satisfying $(T,f)=0\;\forall f\in\cA$ vanishes identically. However, in order to see this more explicitly, we require a few more arguments.\\

We consider functions of the form
\be
f_\bk(\x,\g)= \exp(\ii\bk\cdot\x) h(\g)\;,
\label{functionform}
\ee
where $h$ is in the sub-algebra $\cA_G$ of $\cA$ consisting of functions that do not depend on 
$\x$. Note that the functions $f_\bk$ are not in $\cA$, but they are contained in its closure $\overline\cA$ with respect to $\Vert\cdot\Vert_w$, as is shown in \cref{app:closure}.

Now, to prove \eqref{eq:sde_meas}, we consider the functions
\be
\hat F_i(\bk)=\int_{\cM_r}\sigma_P(\x,\g) A_i f_\bk(\x,\g)\,d\mu
\ee
and
\be
\hat F_{j,k}(\bk)=\int_{\cM_r}\sigma_P(\x,\g) A_{j,k}f_\bk(\x,\g)\,d\mu\;,
\ee
which all vanish due to \eqref{int_byp1}, \eqref{int_byp2},
and the fact that $f_\bk$ is in $\bar\cA$.
Moreover, $\hat F_i(\bk)$ is the Fourier transform of 
\be
F_i(\x)=\int_{G^m} h(\g)A_i^T\sigma_P(\x,\g) \,d^mg\;,
\label{sde_inter1}
\ee
and $\hat F_{j,k}(\bk)$ the Fourier transform of 
\be
F_{j,k}(\x)=\int_{G^m} h(\g)A_{j,k}^T\sigma_P(\x,\g) \,d^mg\;,
\label{sde_inter2}
\ee
which also all vanish because the Fourier transformation is one-to-one on the space of (tempered) distributions on $\R^n$ 
\cite{RS72}.
By the \emph{Peter--Weyl} theorem (see, e.g., \cite{Bum13} (Theorem 4.1) or
\cite{Kna86} (Theorem 1.12)), $\cA_G$ is dense in the space $\cC(G^m)$ of
continuous functions on $G^m$. Therefore, it is also dense in the space $\cC^\infty(G^m)$, the
standard space of test functions on $G^m$, in the appropriate topology. As a consequence, the vanishing
of \eqref{sde_inter1} and \eqref{sde_inter2} for arbitrary $h(\g)$ implies \eqref{eq:sde_meas}. 

Owing to the requirement that $\rho$ does not vanish anywhere on $\cM_r$, the differential equations \eqref{eq:sde_meas} have the general solution
\be
\sigma_P(\x,\g)= c \rho(\x,\g)
\ee
with $c\in\C$. By choosing $f=\1$ and from the fact that $P$ is normalized, we find that $c=1$ and 
correctness \eqref{correct} holds.
\qedhere
\end{proof}

Let us summarize what we have shown:

\begin{theorem}\label{thm:1}
Let $P$ be a probability distribution on $\cM_c$. Assume that
$\rho$ does not vanish on $\cM_r$ and that all partial derivatives of $\log\rho$ grow at most like a polynomial for $\vert\x\vert\to\infty$. Then, the following are equivalent:
\begin{enumerate}[(a)]
    \item $P$ is  correct in the sense that for any  $f\in\cH$
    \be
    \int_{\cM_c} f dP = \int_{\cM_r} f \rho\,d\mu\;.
    \ee
    \item $P$ satisfies the SDEs \eqref{sde1} and \eqref{sde2} as well as the bounds
    \be
    \left|\int_{\cM_c} f dP\right|\le C \Vert f\Vert_w\;,
    \ee
    with $\rho(\x,\g)=w(\x) \rho_r(\x,\g)$, $w$ obeying \eqref{eq:w_decay},
    \be
    \Vert f\Vert_{w}\equiv \sup_{\cM_r} |f(\x,\g)|w(\x)<\infty  \quad \forall f\in\cH\;,
    \ee 
    and
    \be
    C\equiv \int_{\cM_r} |\rho_r(\x,\g)| d\mu<\infty\;.
    \ee
\end{enumerate}
\end{theorem}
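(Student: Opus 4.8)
The plan is to prove the two implications (a) $\Rightarrow$ (b) and (b) $\Rightarrow$ (a) separately, since the theorem simply repackages as an equivalence the necessity and sufficiency assembled above. For (a) $\Rightarrow$ (b) I would invoke \cref{cond:1,cond:2} directly: the SDEs \eqref{sde1} and \eqref{sde2} follow by integrating the correct-expectation integral \eqref{eq:rho_expectation} by parts and reading off the definitions \eqref{eq:A_i} and \eqref{eq:A_jk} of $A_i$ and $A_{j,k}$, with the polynomial-growth and Gaussian-decay hypotheses guaranteeing the absence of boundary terms; the bound is then immediate from $\rho=w\rho_r$, since $\int_{\cM_c}f\,dP=\int_{\cM_r}f\,w\,\rho_r\,d\mu$ and pulling $\sup_{\cM_r}|f|\,w=\Vert f\Vert_w$ out of the integral leaves the factor $\int_{\cM_r}|\rho_r|\,d\mu=C$.

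The substantive direction is (b) $\Rightarrow$ (a), for which I would run the functional-analytic chain of \cref{prop:1,prop:2,prop:3}. The bound \eqref{bounds} says the functional $f\mapsto\int_{\cM_c}f\,dP$ is $\Vert\cdot\Vert_w$-bounded, equivalently that $\tilde T_P$ on $\tilde\cH=w\cH\subset\cC_0(\cM_r)$ is sup-norm bounded as in \eqref{inf_bound}. Hahn--Banach (\cref{prop:1}) extends it to all of $\cC_0(\cM_r)$, and Riesz--Markov (\cref{prop:2,prop:3}) represents the extension by a density $\tilde\sigma_P$ of finite total variation, so that on the whole of $\cH$ the measure $dP$ may be traded for $\sigma_P=w\tilde\sigma_P$ concentrated on $\cM_r$, as in \eqref{cond_cl2}. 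It is worth noting that this representation already holds on all of $\cH$, which is exactly the space on which correctness is to be shown, whereas the SDEs will be needed only on the smaller algebra $\cA$ to pin down $\sigma_P$.

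The SDEs now enter: integrating \eqref{sde1} and \eqref{sde2} by parts with $dP$ replaced by $\sigma_P$ transfers the SD operators onto $\sigma_P$ as their transposes $A_i^T$ and $A_{j,k}^T$, giving \eqref{int_byp1} and \eqref{int_byp2} for every $f\in\cA$, with the derivatives read distributionally. The crux---and the step where I expect the main effort---is upgrading these integrated identities to the pointwise distributional equations $A_i^T\sigma_P=0$ and $A_{j,k}^T\sigma_P=0$ of \eqref{eq:sde_meas}. The difficulty is that $\cA$ contains no compactly supported test functions, so one cannot directly probe $\sigma_P$ against an arbitrary bump function.

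I would resolve this exactly as in the sufficiency proof: admit the functions $f_\bk=\exp(\ii\bk\cdot\x)h(\g)$ with $h\in\cA_G$, which lie in the $\Vert\cdot\Vert_w$-closure $\overline\cA$ by \cref{app:closure}, so that \eqref{int_byp1} and \eqref{int_byp2} persist against them. Reading the resulting vanishing integrals as Fourier transforms in $\x$ of \eqref{sde_inter1} and \eqref{sde_inter2}, injectivity of the Fourier transform on tempered distributions removes the $\x$-dependence, while Peter--Weyl density of $\cA_G$ in $\cC(G^m)$ removes the $\g$-dependence, yielding \eqref{eq:sde_meas}. Since $\rho$ is nowhere zero on $\cM_r$, the homogeneous transposed equations have the unique (up to scale) solution $\sigma_P=c\,\rho$; evaluating the representation at $f=\1$ with $P$ normalized forces $c=1$, so $\sigma_P=\rho$ and hence $\int_{\cM_c}f\,dP=\int_{\cM_r}f\,\rho\,d\mu$ for all $f\in\cH$, which is correctness. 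This closes the equivalence.
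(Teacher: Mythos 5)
Your proposal is correct and follows essentially the same route as the paper: necessity via \cref{cond:1,cond:2}, and sufficiency via the Hahn--Banach/Riesz--Markov chain of \cref{prop:1,prop:2,prop:3}, the passage to $\sigma_P$ on $\cM_r$, the upgrade of the integrated SDE identities \eqref{int_byp1} and \eqref{int_byp2} to the distributional equations \eqref{eq:sde_meas} using the functions $f_\bk\in\overline{\cA}$, Fourier injectivity, and Peter--Weyl, and finally $\sigma_P=c\rho$ with $c=1$ fixed by normalization. Your observation that the measure representation holds on all of $\cH$ while the SDEs are only needed on $\cA$ matches the paper's own remark, so there is nothing to add.
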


{\bf Remark:} Up to now, we have tried to keep things as general as possible. As a consequence, 
\cref{thm:1} does not provide one single condition for correctness, but in fact an entire family of 
conditions: For one, the choice of the real manifold $\cM_r$ is not unique, as, for instance, continuous deformations (with fixed endpoints) of $\cM_r$ give rise to the same ``correct expectations'' \eqref{eq:rho_expectation} due to Cauchy's theorem and the assumed holomorphicity of $f$ and $\rho$.
However, the bounds $C\Vert f\Vert_w$ do depend on the choice of $\cM_r$ (keep in mind, though, that $\cM_r$ must always be chosen such that it contains no zeros of $\rho$). In addition, the decomposition of $\rho$ into $w$ and $\rho_r$ in \eqref{eq:rho_split} is also arbitrary as long as the resulting $C$ and $\Vert f\Vert_w$ ($\forall f\in\cH$) are finite and $w$ decays at least like a Gaussian.
The constant $C$ and the norm $\Vert f\Vert_w$, as well as their product, will, of course, differ for different decompositions. 

What is perhaps even more important is the freedom in the choice 
of norms. We have used in \eqref{norms} and \eqref{integrable}
the supremum norm for the observables and the $L^1$ norm for the reduced density $\rho_r$. One may, however, just as well choose, for instance, $L^p$ and $L^q$
norms with $1/p+1/q=1$ instead. In particular, one could
choose $p=q=2$, i.e., 
$L^2$ norms for both quantities, replacing 
\eqref{norms} and \eqref{integrable} by
\begin{equation}\label{eq:f_norm_general}
\Vert f\Vert_{w}\equiv \sqrt{\int_{\cM_r} |f(\x,\g)|^2 w(\x)^2 d\mu}<\infty  \quad \forall f\in\cH
\end{equation}
and
\begin{equation}\label{eq:C_general}
C\equiv \sqrt{\int_{\cM_r} |\rho_r(\x,\g)|^2 d\mu}<\infty\;,  
\end{equation}
respectively. It should be understood
that the practical applicability of the proposed correctness criterion might strongly depend on all of 
these choices. A detailed analysis systematically comparing different possible correctness conditions arising from our proposal is, however, beyond the scope of this work.
\section{Basics of complex Langevin simulations}\label{sec:cle}
    Having developed a novel correctness criterion for probability distributions $P$ on complexified
manifolds $\cM_c$, a logical next step is to put it to the test in practical applications. Our
main interest is its applicability in the context of complex Langevin simulations, which we shall thus
briefly motivate and introduce in the present section.

\subsection{The sign problem}
In the path integral picture of quantum field theory, expectation values of observables $\cO$ are
defined as integrals of the form
\begin{equation}\label{eq:qft_expectation_values}
    \langle\cO\rangle_\rho \equiv \int \cD\Phi \cO[\Phi]\rho[\Phi]\;,
\end{equation}
where $\Phi$ collectively denotes the dynamical degrees of freedom of the theory under consideration, $\cD\Phi$ is the
appropriate path integral measure for these fields and $\rho[\Phi]$ is some density giving a
nontrivial weight to each field configuration. Typically, 
\begin{equation}
    \rho[\Phi] = \frac{e^{\ii S_M[\Phi]}}{\int\cD\Phi e^{\ii S_M[\Phi]}} 
        \quad \textnormal{or} \quad 
    \rho[\Phi] = \frac{e^{-S_E[\Phi]}}{\int\cD\Phi e^{-S_E[\Phi]}}\;,
\end{equation}
for a theory with an action $S_M[\Phi]$ in Minkowski spacetime, or, (after performing a suitable Wick rotation) an action $S_E[\Phi]$ in Euclidean space, respectively. The normalization
factors in the denominator ensure that $\langle\1\rangle_\rho=1$. 

The most successful non-perturbative tool for computing $\langle\cO\rangle$ from first principles is
lattice quantum field theory, in which -- conventionally -- the integral in 
\eqref{eq:qft_expectation_values} is approximated by a finite sum over configurations $\{\Phi\}_i$ 
distributed according to $\rho[\Phi]$, which is the traditional importance sampling approach. For this to make sense mathematically, however, one requires that $\rho[\Phi]$ 
is real and nonnegative, i.e., that it can be interpreted as a probability density. Clearly, this
is not the case in Minkowski spacetime, which is why in the vast majority of lattice approaches one generically works in Euclidean space. 
However, the Euclidean action $S_E$ is not guaranteed to be real either. Cases in which $S_E$ 
acquires a nonzero imaginary part include (but are not restricted to)  quantum field theories (in thermodynamic equilibrium) at
nonvanishing chemical potential or with topological terms. 

In all of the aforementioned cases, the 
straightforward application of usual lattice techniques based on importance sampling fails. This is the 
infamous sign (or complex-action) problem \cite{TW05}; see, e.g., \cite{For10pr,GL16,BRL21} for reviews.

\subsection{The complex Langevin equation}
One particular method that is aimed at circumventing the sign problem is the complex Langevin 
approach \cite{Par83,Kla83} based on stochastic quantization \cite{PW81}. The underlying
mechanism is the stochastic evolution of complexified field degrees of freedom in a fictitious
time dimension $\tau$, governed by the complex Langevin equation. For simplicity,
let us restrict to $N$ real scalar degrees of freedom $\Phi_i^R$ ($i=1,\dots,N$) described by the
Euclidean\footnote{Here and in the following, we work in Euclidean space and suppress the 
subscript `$E$' on the Euclidean action. Note, however, that the discussion can straightforwardly
be transferred to Minkowski spacetime via the replacement $-S_E\to \ii S_M$.} action 
$S[\Phi^R]$. Then, the complex Langevin equation reads\footnote{Implicit summation over indices occurring twice in a product is assumed throughout.} 
\begin{equation}\label{eq:cle}
    d\Phi_{i}(\tau) = D_{i}(\Phi)d\tau + H_{ij}dw_j(\tau)\;,
\end{equation}
where the $\Phi_i\equiv\Phi^R_i+\ii\Phi^I_i$ result from the continuation of each $\Phi_i^R$ into the complex plane. Moreover, $D_{i}$ denotes the so-called drift term,
\begin{equation}
    D_{i} = -H_{ik}H_{jk}\frac{\partial S}{\partial\Phi_j} +
    \frac{\partial (H_{ik}H_{jk})}{\partial\Phi_j}\;,
\end{equation}
$H_{ij}$ is a (generally $\Phi$-dependent) complex $N\times N$ matrix called the kernel and $dw_i$ describes a standard Wiener process in each degree of freedom, such that
\begin{equation}
    \langle dw_j(\tau)\rangle = 0\;, \quad \langle dw_j(\tau)dw_k(\tau)\rangle = 2\delta_{jk}d\tau
\end{equation}
for all field indices $j$ and $k$.
In order for this approach to be well defined, the action must allow for an analytic continuation
$S[\Phi]$ in its arguments and the same must be true for the observables 
$\cO[\Phi]$. We also mention that the extension of the Langevin equation to gauge theories is relatively straightforward; see, e.g., \cite{BKK85}.

Now, the stochastic evolution governed by the complex Langevin equation gives rise to a (real and nonnegative) probability distribution 
$P(\Phi^R_i,\Phi^I_i;\tau)$ on the complex manifold, whose $\tau$-dependence is, in turn,
determined by an associated so-called Fokker--Planck equation. The exact form of the latter, however, is irrelevant for the
present discussion. In the following, we assume that $P$ has an equilibrium limit
\begin{equation}
    P(\Phi^R_i,\Phi^I_i) \equiv \lim_{\tau\to\infty}P(\Phi^R_i,\Phi^I_i,\tau)\;.
\end{equation}
Note that this assumption is not always justified, but only in the case when there exists such a limit is the complex
Langevin approach applicable. As a consequence, the complex Langevin evolution after equilibration indeed leaves us with a probability distribution $P$ on a 
complex manifold $\cM_c$, for which the correctness condition developed in \cref{sec:criteria} can be checked. Before doing so below in the context of a few simple models, however, let us first recall 
possible reasons for why and how complex Langevin simulations may fail in the first place.

\subsection{Reasons of failure for the complex Langevin approach}
The study of possible error sources for complex Langevin simulations has a rather long history
\cite{Sal93,GL93,FOS94,Gau98,ASS10,AJS10,AJ10,AJS11,AJP13,AGS13,NS15,Sal16,NNS16,ASS17,SS19,SSS19,SSS20,Sei20,SSS24,BHM25}. One common issue is the presence of so-called runaway trajectories,
i.e., paths in complexified field space along which the classical evolution (i.e., the one excluding the noise
term in \eqref{eq:cle}) diverges. Their existence may potentially lead to large discretization effects or even cause simulations to crash
when the involved drift terms become too large. However, the effects of runaway trajectories can often
be mitigated or even eliminated entirely in practice by the use of adaptive step-size control \cite{AJS10}, implicit solvers \cite{ALR21},
and (in the case of gauge theories) the gauge cooling technique \cite{SSS13}. At the very least,
it is usually possible to detect whether or not runaways have an impact on a given simulation by, for 
instance, monitoring the distance of the simulation trajectory from the real manifold.

In addition, complex Langevin simulations may suffer from a loss of ergodicity, especially
in the presence of poles in the drift term. In this case, one may find that the sampled configuration space is split into different disconnected parts \cite{FOS86,FOS94,ASS17,SS19}. While transitions between these regions are
possible in simulations due to the finiteness of the discrete Langevin 
step size, they become increasingly rare in the continuum limit and their absence can lead to incorrect results. It is not entirely clear how to deal with a loss of ergodicity in the general case, even though in a few models there exist solutions employing, for instance, a kernel or complex noise \cite{ASS17}.

Another issue concerns the decay properties of the probability distribution $P$ resulting from the 
complex Langevin evolution. Generically, one expects the complex Langevin results to be correct only if 
$P$ decays fast enough in all complex directions. More precisely, a correctness criterion for the complex
Langevin evolution was formulated in \cite{ASS10,AJS11} that relies on the sufficiently fast 
decay of $P\cO$ ($\cO$, as before, denoting an observable) such that one may perform an integration by parts without the appearance of 
boundary terms. Most generally, however, one cannot expect such boundary terms to vanish, which is why
it is advisable to measure them in practice \cite{SSS19,SSS20}. If the boundary terms are nonzero, the obtained
simulation results are generally incorrect (even though for some observables the expectation values may still come out right). 

The converse, however, is not always true, i.e., the absence of boundary terms does not 
automatically imply the correctness of results \cite{ASS17,SS19,HMS25}. Similarly, while the validity of the SDEs is necessary for correct convergence, it is not sufficient.
In particular, as was shown 
analytically for one-dimensional systems in \cite{SS19}, any linear functional $T$ (such as the one provided by the complex Langevin evolution) on a suitable space of test functions (such as the algebra $\cA$ defined in \cref{sec:generalities}) that obeys the SDEs
is given by a linear combination of the form
\begin{equation}\label{eq:sase_theorem}
    (T,f) = \sum_{i=1}^{n_\gamma} a_i(T_{\gamma_i},f)\;.
\end{equation}
Here, the $a_i$ are complex coefficients and 
\begin{equation}
    (T_{\gamma_i},f) \equiv \int_{\gamma_i} dz f(z)\rho(z)
\end{equation}
are the ``expectation values'' of $f$ along paths $\gamma_i$, which are linearly independent so-called integration cycles (in the 
terminology of \cite{Wit11}). Concretely, an integration cycle $\gamma_i$ is either a path that connects two distinct zeros of $\rho(z)$ or a closed noncontractible path.
It is important to note that by an integration cycle one really means an equivalence class of  integration paths that can be deformed into one another or are equivalent via Cauchy's theorem. In \eqref{eq:sase_theorem}, we have denoted the total number of 
linearly independent integration cycles by $n_\gamma$. Now, if we define (without loss of generality) $\gamma_1=\cM_r$, we say that results are correct if and only if $a_i=\delta_{i1}$. 

The role of integration cycles in the context of 
complex Langevin simulations was studied in detail in \cite{HMS25}, where evidence for the 
validity of \eqref{eq:sase_theorem} beyond one dimension was established as well. 
We can summarize the situation as follows: if for a particular complex Langevin simulation boundary terms 
vanish/the SDEs hold, but the results are nonetheless incorrect, this is necessarily due to 
contributions from unwanted integration cycles, i.e., from $\gamma_i\neq \cM_r$. The relationship between vanishing boundary terms, the validity of the SDEs and the so-called complex Langevin consistency conditions \cite{AJS11} is discussed in some detail in \cref{app:cc_sde}.
We mention in passing that the sampling of unwanted integration cycles can in some cases be
related to the aforementioned ergodicity problems \cite{ASS17,SS19}. Either way, the 
correctness criterion we propose in this work should -- at least in principle -- be able to fill the gap left open by
the boundary-term (and equivalent) analyses, enabling one to detect incorrect convergence even when boundary terms 
are zero. 

Let us now briefly address the scope of applicability of this correctness criterion in
practice. Needless to say, there is little hope in trying to use it to establish unambiguously that
a particular simulation gives rise to correct results. Since this would require measuring every
observable contained in the space $\cA$ (or $\cH$), which is of infinite size, it is clearly an impossible task.
What it should allow one to do, however, is rule out certain results despite the absence
of boundary terms or in cases where boundary terms are very small or noisy. To exemplify this idea, below the
criterion will be applied in the context of a few simple models for which complex Langevin simulations are known to
produce no boundary terms but whose results nonetheless disagree with exact solutions \cite{ASS17,SS19,HMS25}. Obviously, in most realistic systems no exact results are available, making any sort of correctness criterion all the more desirable.

\section{Examples and numerical checks}\label{sec:numerics}
    In this section, we perform numerical tests on the new condition for correctness for a handful of simple
systems with known results. We restrict ourselves to models with $m=0$, i.e., there are no Lie-group-valued degrees of freedom, such that $\cM_r=\R^n$. A more detailed analysis in the case of gauge theories is deferred to future work. Following the FAIR guiding principles \cite{FAIR16}, we make the simulation data underlying the results presented in this section \cite{data}, as well as the used analysis scripts \cite{scripts}, available online. Moreover, our simulation code is available upon request.

\subsection{Regular models}
We begin by considering models whose densities are given by $\rho(\x)=e^{-S(\x)}$ with polynomial $S(\x)$. This simplifies the discussion as in this case $\cA=\cH$, i.e., we can restrict to an algebra of polynomial observables. The numerical simulation and data analysis setup we use for these models is outlined in detail in \cite{HMS25}. In fact, we re-use parts of the data set generated for \cite{HMS25} in the present work.

\subsubsection{One dimension}
The first model to be investigated is given by the action
\begin{equation}\label{eq:quartic_1d}
    S(x) = \frac{\lambda}{4}x^4\;,
\end{equation}
with a single real degree of freedom $x$ and a complex parameter $\lambda$ obeying 
$\vert\lambda\vert=1$. The correctness of the
probability distribution $P$ produced by complex Langevin simulations including a kernel in this 
model was first
investigated in \cite{OOS89}. More recently, in \cite{MHS24p,HMS25} the model was revisited and the 
connection between the kernel and the relevant integration cycles was investigated. In particular, 
it was discovered that a kernel can -- to a certain extent -- be used to control which integration cycles 
are being sampled in a simulation. Here, we once again revisit this simple model as a testbed for 
our correctness criterion. 

The complex Langevin equation \eqref{eq:cle} for the theory \eqref{eq:quartic_1d}, after complexification 
$x\to z=x+\ii y$, simplifies to
\begin{equation}
    dz(\tau) = -H^2S'(z)d\tau + Hdw(\tau)\;,
\end{equation}
where we have assumed the kernel $H$ to be independent of $z$ for simplicity. It is thus merely a constant complex number that we may choose at will. As in \cite{OOS89,MHS24p,HMS25}, we parametrize the kernel\footnote{In the cited works, it was $K\equiv H^2$ that was parametrized instead of $H$, but here all definitions are adapted accordingly to produce an equivalent outcome.} with an integer $m_H$ as
\begin{equation}\label{eq:kernel_quartic_1d}
    H = e^{-\ii \frac{m_H}{48}}\;.
\end{equation}
An important finding of \cite{MHS24p,HMS25} was that by an appropriate choice of kernel one can indeed obtain $a_i=\delta_{i1}$ in \eqref{eq:sase_theorem}, where $\gamma_1=\cM_r$. If any cycle other than $\cM_r$ were to give a sizable contribution to the results, the latter would be considered incorrect.
This incorrectness, however, could not be detected simply via a boundary-term analysis. In the 
following, however, we shall demonstrate that it can be detected via the new criterion. 

To this end, we make use of the freedom of choosing $\cM_r$ at our discretion. In 
particular, since the real $x$ axis does not correspond to a valid integration cycle if
$\mathrm{Re}(\lambda)<0$, we define as $\cM_r$ the integration contour resulting from rotating the real axis as $x=\lambda^{-1/4}\xi$, along which the
analytically continued action $S(\xi)$ is real. We mention, however, that
in a realistic theory the choice of $\cM_r$ is dictated by the physics and one only has the freedom of continuously deforming it, contrary to what is done here. As is detailed in \cite{HMS25},
there are two more independent integration cycles in this model. In analogy to \cite{HMS25}, we choose the imaginary axis
and a contour connecting $z=\ii\infty$ with $z=\infty$, both after the same rotation as for the real axis. In the following, we set $\lambda=e^{5\ii\pi/6}$. Moreover, from now on we will exclusively use the letter $z$ to denote the dynamical degree of freedom to avoid possible confusion arising from jumping back and forth between $x$ and $z$.

In order to test for correctness with the new criterion, we need to establish the validity of two requirements: the SDEs \eqref{sde1} and the bounds \eqref{bounds}. For the former, we consider monomial observables $z^k$ with integer $k>0$, for which the SDEs read 
\begin{equation}\label{eq:dse_quartic_1d}
    \left\langle Az^k\right\rangle = k\left\langle z^{k-1}\right\rangle - \lambda\left\langle z^{k+3}\right\rangle = 0\;,
\end{equation}
with $A$ given in \eqref{eq:A_i}. The left-hand side of this equation, as resulting from complex Langevin simulations (with complex Langevin expectation values labelled by the subscript ``CL''), is plotted as a function of the kernel parameter $m_H$ in \cref{fig:quartic_1d} (left). Notice that we only show results for those $m_H$ for which there are no boundary terms \cite{HMS25}. In the cases where boundary terms are nonzero, we found the SDEs to be strongly fluctuating, making it impossible to obtain a usable signal. Also note that fluctuations increase with increasing $k$ such that we refrain from showing results for $k>5$. Both of these observations, however, are consistent with the expectations. 

The main message of \cref{fig:quartic_1d} (left) is that there are four different regions in $m_H$ for which the SDEs are well satisfied and boundary terms vanish. Nonetheless, it is known from \cite{OOS89} that only one of these regions corresponds to correct solutions. Thus, we are indeed confronted with a situation where the study of boundary terms (or SDEs) alone is not sufficient to make definite statements about correct convergence. This was found to be the case also in another simple model in \cite{ALR23}. The same is true for the widely-employed correctness criterion based on the decay properties of the complex-Langevin drift term \cite{NNS16}. Indeed, as was shown in \cite{HMS25}, the respective distributions of $z$ in the complex plane for, e.g., $m_H=10$ and $m_H=34$ are identical up to a rotation. This can be understood from the fact that for both these choices of kernel the complex Langevin evolution is equivalent to a real Langevin evolution, albeit on a rotated real axis, the rotation angle being the only thing distinguishing between them. As a consequence, the distributions of drift magnitudes must be equivalent by definition. Clearly, thus, the drift criterion cannot detect this incorrect convergence. Let us therefore invoke the new correctness criterion.

\begin{figure}[t]
    \begin{subfigure}{0.49\linewidth}
        \centering
        \includegraphics[width=\textwidth]{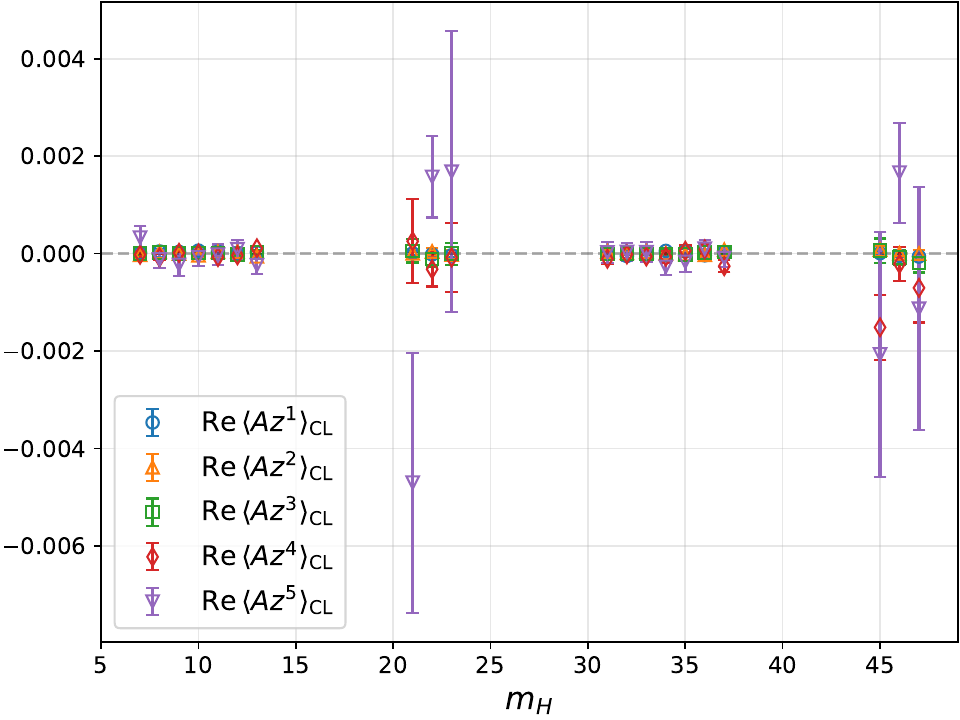}
    \end{subfigure}
    \begin{subfigure}{0.49\linewidth}
        \centering
        \includegraphics[width=\textwidth]{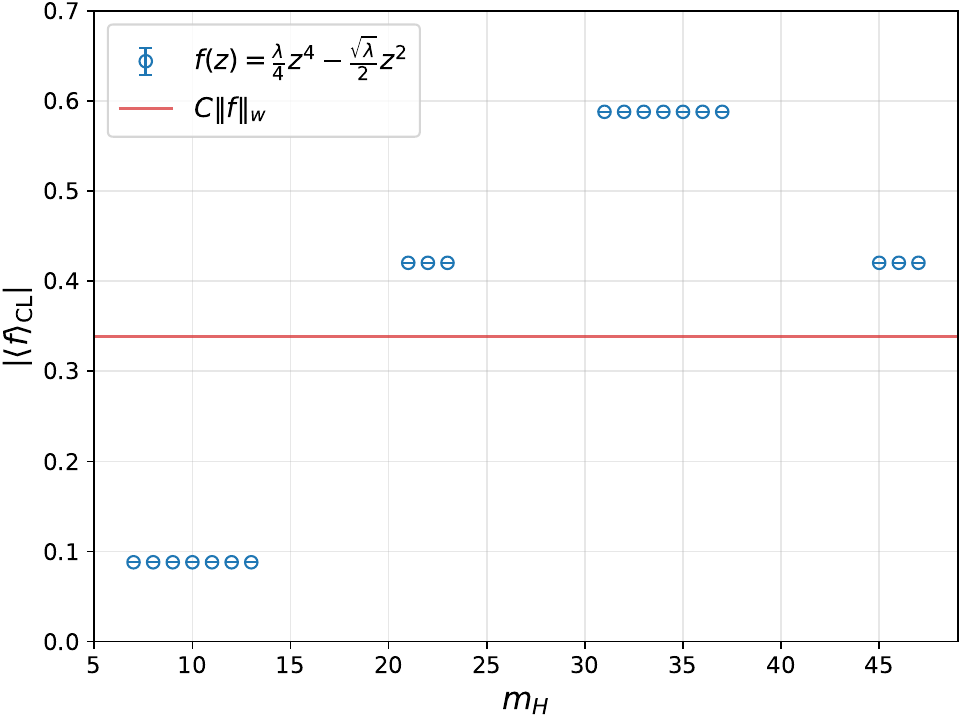}
    \end{subfigure}
    \caption{Complex Langevin results for the model \eqref{eq:quartic_1d}, with $\lambda=e^{5\ii\pi/6}$ and a kernel of the form \eqref{eq:kernel_quartic_1d}, as a function of the parameter $m_H$. We only show results for values of $m_H$ for which boundary terms are consistent with zero. \emph{(left):} Validity of the SDEs \eqref{eq:dse_quartic_1d} with $k=1,\dots,5$. Note that we only plot the real parts of $\left\langle Az^k\right\rangle$ here; the imaginary parts show similar behavior. The dashed horizontal line marks zero. \emph{(right):} Absolute value of the expectation value of the control observable $f(z)$ in \eqref{eq:control_variable_quartic_1d}. The solid horizontal line shows the bound $C\Vert f\Vert_w$ from \eqref{bounds}.}
    \label{fig:quartic_1d}
\end{figure}

To this end, we follow \cref{sec:measure_correctness} by splitting
\begin{equation}
    \rho(z)=\frac{1}{Z}e^{-S(z)}\;, \quad Z=\int_{\cM_r}dz\,e^{-S(z)}
\end{equation} 
according to \eqref{eq:rho_split}, with the choice
\begin{equation}
    w(z) = e^{-\frac{S(z)}{2}}\;, \quad  \rho_r(z) = \frac{1}{Z}e^{-\frac{S(z)}{2}}\;.
\end{equation}
With this, the constant $C$ in \eqref{integrable} is simply given by
\begin{equation}
    C=\frac{1}{\vert Z\vert}\int_{\cM_r}dz \left\vert e^{-\frac{S(z)}{2}}\right\vert=
    \frac{\int_{-\infty}^{\infty}d\xi e^{-\frac{\xi^4}{8}}}{\left\vert\int_{-\infty}^{\infty}d\xi e^{-\frac{\xi^4}{4}}\right\vert}=\sqrt[4]{2}\;,
\end{equation}
which is independent of $\lambda$ due to our $\lambda$-dependent choice of $\cM_r$. Moreover, the norm $\Vert f\Vert_w$ in \eqref{norms},
\begin{equation}
    \Vert f\Vert_w = \sup_{z\in\cM_r}\vert f(z)\vert e^{-\frac{S(z)}{2}}\;,
\end{equation}
can straightforwardly be evaluated numerically for polynomial $f(z)$.
Let us consider the following observable:
\begin{equation}\label{eq:control_variable_quartic_1d}
    f(z) = \frac{\lambda}{4}z^4-\frac{\sqrt{\lambda}}{2}z^2\;,
\end{equation}
for which we find $C\Vert f\Vert_w\approx0.34$ for the bound on the right-hand side of \eqref{bounds}. In \cref{fig:quartic_1d} (right), we show the
$m_H$-dependence of the complex Langevin expectation value of $f$ for those values of $m_H$ for which the SDEs are valid. Moreover, $C\Vert f\Vert_w$ is shown as 
the full horizontal line for comparison. 

One finds that out of the four regions on which the SDEs hold, the bound \eqref{bounds} is obeyed only on one, namely the one around $m_H=10$. This is precisely the range of $m_H$ on which the complex Langevin evolution produces correct results \cite{OOS89}, i.e, samples only the desired integration cycle and no other \cite{HMS25}. Thus, with this 
particular choice of observable, the proposed correctness criterion is capable of ruling out the other
plateaus. We emphasize once again, however, that the results of \cref{fig:quartic_1d} (right) are not sufficient on their own to prove correctness on the plateau around $m_H=10$. To unambiguously prove correctness, one would have to check an infinite number of bounds. 

\subsubsection{Two dimensions}
A very similar analysis can be carried out for a straightforward generalization of 
\eqref{eq:quartic_1d} to two dimensions that was also studied in \cite{HMS25}:
\begin{equation}\label{eq:quartic_2d}
    S(x_1,x_2) = \frac{\lambda}{4}(x_1^2+x_2^2)^2\;.
\end{equation}
As discussed in \cite{HMS25}, there are only two independent integration cycles in this model. For real and positive $\lambda$, they correspond to the two manifolds on which, after complexification, $z_1$ and $z_2$ are both real and both imaginary, respectively. For general $\lambda$, we rotate these cycles similarly to the one-dimensional case discussed above. In particular, the cycle $\gamma_1=\cM_r$ is defined such that on $\gamma_1$ the action is real. Once again, in the following we use the notation $z_i$ for the degrees of freedom exclusively.

The Langevin equation we employ reads
\begin{equation}
    dz_i(\tau) = -H^2\frac{\partial S}{\partial z_i}d\tau + Hdw_i(\tau)\;,
\end{equation}
i.e., we assume the kernel to be proportional to the identity matrix, $H_{ij}=\delta_{ij}H$, and we parametrize $H$ as in \eqref{eq:kernel_quartic_1d}. 
For testing the SDEs, we consider monomial observables of the form $z_1^{k_1}z_2^{k_2}$ with integer $k_i\geq0$. The SDEs \eqref{sde1} then read
\begin{align}\label{eq:dse_quartic_2d}
    \begin{aligned}
        \left\langle A_1z_1^{k_1}z_2^{k_2}\right\rangle = k_1\left\langle z_1^{k_1-1}z_2^{k_2}\right\rangle - \lambda\left[\left\langle z_1^{k_1+3}z_2^{k_2}\right\rangle + \left\langle z_1^{k_1+1}z_2^{k_2+2}\right\rangle\right] = 0\;,\\
        \left\langle A_2z_1^{k_1}z_2^{k_2}\right\rangle = k_2\left\langle z_1^{k_1}z_2^{k_2-1}\right\rangle - \lambda\left[\left\langle z_1^{k_1+2}z_2^{k_2+1}\right\rangle + \left\langle z_1^{k_1}z_2^{k_2+3}\right\rangle\right] = 0\;.
    \end{aligned}
\end{align}
The left-hand sides of \eqref{eq:dse_quartic_2d} as a function of the kernel parameter $m_H$ for the simple cases $(k_1,k_2)=(0,1)$ and $(1,0)$ are shown in \cref{fig:quartic_2d} (left). Again, we find different extended regions in $m_H$ on which the SDEs hold and boundary terms vanish (for values of $m_H$ where there are nonzero boundary terms again no results are shown). Here, however, there are only two such regions. This has to do with the way the number of independent integration cycles is reduced from three in one dimension down to two in the two-dimensional model \eqref{eq:quartic_2d} and is discussed in \cite{HMS25}. 

\begin{figure}[t]
    \begin{subfigure}{0.49\linewidth}
        \centering
        \includegraphics[width=\textwidth]{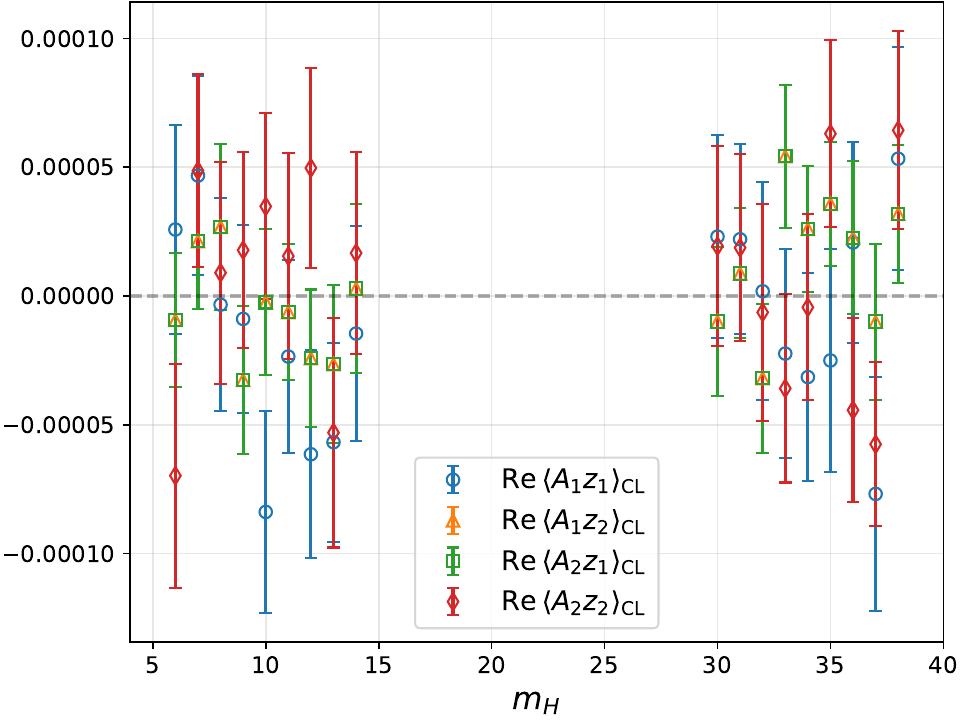}
    \end{subfigure}
    \begin{subfigure}{0.49\linewidth}
        \centering
        \includegraphics[width=\textwidth]{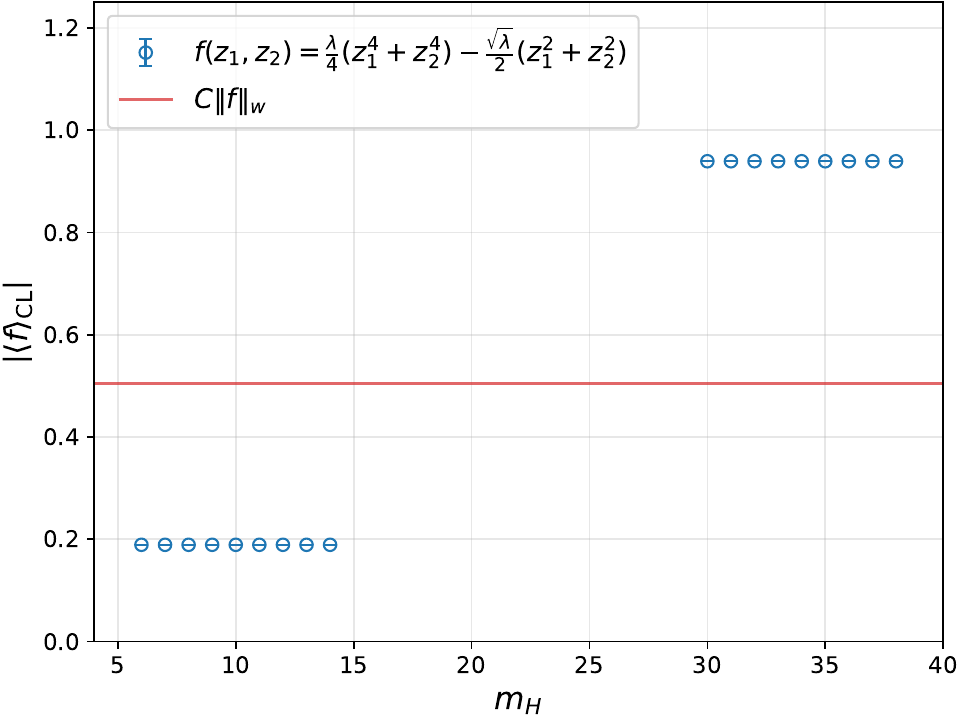}
    \end{subfigure}
    \caption{Similar to \cref{fig:quartic_1d} but for the model \eqref{eq:quartic_2d}. We again use $\lambda=e^{5\ii\pi/6}$ and a kernel proportional to the identity matrix, its entries given by \eqref{eq:kernel_quartic_1d}. \emph{(left):} Validity of \eqref{eq:dse_quartic_2d} with $k_1=0$, $k_2=1$ and vice versa. \emph{(right)}: Expectation value of $f(z)$ in \eqref{eq:control_variable_quartic_2d} and comparison with $C\Vert f\Vert_w$ from \eqref{bounds}.}
    \label{fig:quartic_2d}
\end{figure}

In order to apply the correctness criterion, we split $\rho(z_1,z_2)$ into
\begin{equation}
    w(z_1,z_2) = e^{-\frac{S(z_1,z_2)}{2}}\;, \quad \rho_r(z_1,z_2)=\frac{1}{Z}e^{-\frac{S(z_1,z_2)}{2}}\;,
\end{equation}
with
\begin{equation}
    Z=\int_{\cM_r}dz_1dz_2\,e^{-S(z_1,z_2)}\;,
\end{equation}
similar to before, and we
obtain $C=\sqrt{2}$ from \eqref{integrable}. As a control observable, we choose
\begin{equation}\label{eq:control_variable_quartic_2d}
    f(z_1,z_2) = \frac{\lambda}{4}\left(z_1^4+z_2^4\right)-\frac{\sqrt{\lambda}}{2}\left(z_1^2+z_2^2\right)\;,
\end{equation}
yielding $C\Vert f\Vert_w\approx0.50$. The analog of \cref{fig:quartic_1d} (right) for the two-dimensional model is shown in \cref{fig:quartic_2d} (right), with equivalent conclusions.

We have thus established that the new correctness criterion can detect incorrect convergence due to
unwanted integration cycles in simple quartic models in one and two dimensions. We believe that 
these results can be extended to quartic models in higher dimensions without too many complications. Regarding models with more involved polynomial actions, we are confident that one should be able to find viable control observables $f(\z)$ reasonably easily, but we have not attempted to do so yet. We stress that $f(z)$ and $f(z_1,z_2)$ above were chosen in a purely \emph{ad-hoc} manner and not derived systematically. At the time of writing, we are not aware of any good guiding principle on how to choose appropriate $f(\z)$ for general theories. This will become particularly clear for the model studied in the next subsection.

\subsection{Densities with zeros}\label{sec:density_zeros}
We now turn to a (one-dimensional) theory whose density $\rho$ has a zero on $\cM_c$. In particular, we shall be concerned with the model described by 
\begin{equation}\label{eq:one_pole}
    \rho(z) = \frac{1}{Z}(z-z_p)^{n_p}e^{-\beta z^2}\;, \quad Z = \int_{-\infty}^{\infty}dz(z-z_p)^{n_p}e^{-\beta z^2}\;.
\end{equation}
This theory, for which again $\cM_r=\R$, was studied before in 
\cite{ASS17,SS19}. In particular, it was found in \cite{ASS17} that the 
complex Langevin evolution (without the introduction of a kernel and for $z_p$ away from the real axis) converges 
correctly for large enough values of $\beta$, but fails for smaller ones. This failure, despite the absence of boundary terms, was then explained to be due to the 
contribution of unwanted integration cycles \cite{ASS17,SS19}, making the model a
prime target for the new correctness criterion. We emphasize that this is an example for the appearance of unwanted cycles even without using a kernel. Let us now discuss the 
setup in detail.

The density \eqref{eq:one_pole}, in which $n_p$ is assumed to be a positive
integer, $\beta$ is assumed real and $z_p$ is an arbitrary complex number,
has two linearly independent integration cycles, namely
\begin{equation}
    \int_{\gamma_+}dz \equiv \int_{z_p}^{\infty}dz\;, \quad
    \int_{\gamma_-}dz \equiv \int_{-\infty}^{z_p}dz\;,
\end{equation}
where the exact integration paths are again irrelevant due to Cauchy's theorem. 
The only important property is that the paths start at some point in complex 
infinity where $\rho$ vanishes and end in $z_p$, the finite zero of $\rho$.  

The integral over the real axis, which is what we are ultimately after, can be
written as a linear combination of the two. In particular, with the 
definitions
\begin{equation}
    Z_\pm \equiv \int_{\gamma_\pm}dz (z-z_p)^{n_p}e^{-\beta z^2}\;,
\end{equation}
such that $Z=Z_++Z_-$, we can write the desired expectation values as
\begin{equation}
    \langle f\rangle_\rho \equiv \int_{-\infty}^\infty dz f(z)\rho(z) 
    = \frac{Z_+}{Z}\langle f\rangle_+ + \frac{Z_-}{Z}\langle f\rangle_-\;,
\end{equation}
with
\begin{equation}
    \langle f\rangle_\pm \equiv \frac{1}{Z_\pm}\int_{\gamma_\pm}dz f(z)(z-z_p)^{n_p}e^{-\beta z^2}\;.
\end{equation}
Therefore, in the light of the theorem \eqref{eq:sase_theorem}, the complex
Langevin evolution is correct if and only if it produces $a_\pm=\frac{Z_\pm}{Z}$, where we have defined $\gamma_1=\gamma_+$, $\gamma_2=\gamma_-$ and similar for $a_1$ and $a_2$ in \eqref{eq:sase_theorem}.

In the following, we consider $\beta=1.6$, $z_p=\ii$, and $n_p=2$, giving $\frac{Z_\pm}{Z}\approx0.50\pm0.23\ii$. That these values are not reproduced by a complex Langevin simulation (without a kernel), i.e., that the simulation produces incorrect results, was found in \cite{SS19}. In the following, we demonstrate that one may use the new correctness criterion to prove this fact without having to compute the $a_\pm$ explicitly. For the simulation of \eqref{eq:one_pole}, our simulation setup is analogous to the one used in the previous subsection, outlined in \cite{HMS25}, the only difference being that here we restrict the maximum allowed Langevin step size in our adaptive step-size algorithm to $10^{-6}$ instead of $10^{-5}$.

In order to apply the correctness criterion, as before, we first study the validity of the SDEs \eqref{sde1}, which, for the model \eqref{eq:one_pole} and monomial observables $z^k$, read
\begin{equation}\label{eq:dse_one_pole}
    \left\langle Az^k\right\rangle = k\left\langle z^{k-1}\right\rangle + n_p\left\langle\frac{z^k}{z-z_p}\right\rangle - 2\beta\left\langle z^{k+1}\right\rangle = 0\;.
\end{equation}
This equation demonstrates that here
we indeed have to consider a larger space of observables than just the polynomials, since $\cH\neq\cA$ due to the zero of $\rho(z)$ at $z=z_p$. The left-hand side of \eqref{eq:dse_one_pole} for different $k$ is shown in \cref{fig:dse_one_pole} and we find that the SDEs are indeed well satisfied. 

\begin{figure}[t]
    \centering
    \includegraphics[width=0.5\textwidth]{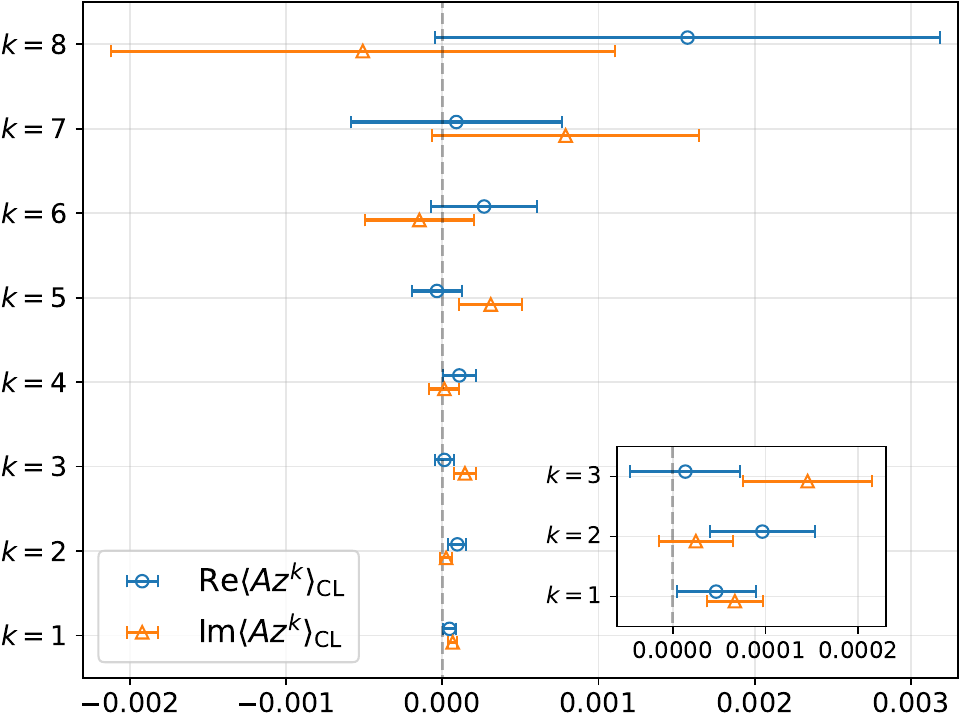}
    \caption{Validity of the SDEs  \eqref{eq:dse_one_pole} in the model \eqref{eq:one_pole} with $\beta=1.6$, $z_p=\ii$, $n_p=2$, $k=1,\dots,8$ and without a kernel. The inset shows a close-up view of the smallest few $k$. The data points have been offset vertically for visual clarity.}
    \label{fig:dse_one_pole}
\end{figure}

Now, in order to show that the complex Langevin results are nonetheless incorrect via the new correctness criterion, it again suffices to find a single
control observable $f$ that violates the bounds in \eqref{bounds}. We define
\begin{equation}
    w(z) = e^{-\frac{\beta z^2}{2}}\;, \quad \rho_r(z) = \frac{1}{Z}(z-z_p)^{n_p}e^{-\frac{\beta z^2}{2}}\;.
\end{equation}
While it turned
out that finding such a polynomial can be a nontrivial task in practice, we were able to obtain at least a handful that serve our purpose. The (notably) simplest one of them reads
\begin{equation}
    f(z) = \sum_{n=0}^{12}c_nz^n\;,
\end{equation}
with the coefficients
\begin{align}
    \begin{aligned}
    c_0 = 210\;, \quad c_1 = -1213\ii\;, \quad c_2 = -3505\;, \quad c_3 = 6331\ii\;, \quad c_4 = 8583\;, \\
    c_5 = -8175\ii\;, \quad c_6 = -6825\;, \quad c_7 = 3813\ii\;, \quad c_8 = 2204\;, \\
    c_9 = -698\ii\;, \quad c_{10} = -300\;, \quad c_{11} = 42.2\ii\;, \quad c_{12} = 14.2\;. 
    \end{aligned}
\end{align}
It was obtained by minimizing the distance\footnote{One possible approach to find suitable polynomials in practice could thus be to start with an arbitrary function for which one can somehow show that it violates the bounds \eqref{bounds} and then approximate it (in the norm \eqref{norms}) with a polynomial of sufficiently high order. We also remark that we are, in fact, not restricted to using polynomials here. Any element of $\cH$ would work in principle.} $\Vert f(z)-g(z)\Vert_w$ to the function $g(z)=e^{-2(1+(2\ii+z)^2)}$, which we found (rather by coincidence) to violate the bounds, within the norm \eqref{norms}. 
For this polynomial, we find that the complex Langevin evolution produces
an expectation value whose absolute value reads $\left\vert\langle f\rangle_{CL}\right\vert=902$ with negligible statistical uncertainties,
while the bound on the right-hand side of \eqref{bounds} reads 
$C\Vert f\Vert_w\approx702$. This bound is thus clearly violated, from 
which one can safely conclude that the complex Langevin results are incorrect.

\section{Towards more realistic models}\label{sec:realistic}
    In this section, we briefly comment on the prospects of applying the new correctness criterion to lattice models. After splitting the
measure of the theory as in \eqref{eq:rho_split}, we need to establish whether the following bound holds in a complex Langevin simulation for all observables $f\in\cH$:
\be
| \langle f \rangle_\textrm{CL} | \le C \Vert f \Vert_w,
\ee
with $\Vert f\Vert_w$ and $C$ defined in \eqref{norms} and 
\eqref{integrable}, respectively.

For a given $f$, determining $\Vert f \Vert_w $ is relatively easy. For instance, one could start at a random point in $ \cM_r$ and find a local maximum
of $ |f(\x,\g) | w(\x) $ using the gradient \emph{ascent} method, since both $ f$ and $ \omega$ are expected to be smooth functions in practice.
To find a global maximum, one starts the gradient ascent
method at $N_\textrm{init}$ initial points and chooses the largest
local maximum as an approximation to the global maximum.
Since  $ |f(\x,\g) | w(\x)  \rightarrow 0 $ as $\vert\x\vert \rightarrow \infty $, this procedure is expected to converge.

In order to compute $C$, let us introduce the non-normalized density $\rho'$ as
\begin{equation}
    \rho(\x,\g) \equiv \frac{1}{Z}\rho'(\x,\g) \quad \textnormal{with} \quad Z \equiv \int_{\cM_r}\rho'(\x,\g)d\mu\;,
\end{equation}
and similar for the reduced density $\rho_r$,
\begin{equation}
    \rho_r(\x,\g)\equiv\frac{1}{Z}\rho_r'(\x,\g)\;,
\end{equation}
such that $\rho'(\x,\g)=w(\x)\rho_r'(\x,\g)$.
Then, we find that 
\begin{equation}
C = \int_{\cM_r} \vert\rho_r(\x,\g)\vert d\mu
 = \frac{\int_{\cM_r} \vert\rho_r'(\x,\g)\vert d\mu}{\left\vert\int_{\cM_r}\rho'(\x,\g)d\mu\right\vert}
\end{equation}
and, thus,
\begin{equation}
C = \frac{\int_{\cM_r} \vert\rho_r'(\x,\g)\vert d\mu}{\left\vert\int_{\cM_r} \frac{\rho'(\x,\g)}{|\rho_r'(\x,\g)|} |\rho_r'(\x,\g)| d\mu\right\vert} = \frac{1}{\left\vert\left\langle \frac{\rho'(\x,\g)}{|\rho_r'(\x,\g)|} \right \rangle _ {|\rho_r'|}\right\vert}\;.
\end{equation}
Here, we have defined the expectation value with respect to $\vert\rho_r'\vert$ as
\begin{equation}
    \langle f(\x,\g)\rangle_{\vert\rho_r'\vert}\equiv \frac{\int_{\cM_r}f(\x,\g)\vert\rho_r'(\x,\g)\vert d\mu}{\int_{\cM_r}\vert\rho_r'(\x,\g)\vert d\mu}\;.
\end{equation}

This means that $C$ can in principle be evaluated in a Monte-Carlo simulation
using the measure $ | \rho_r'(\x,\g)| $ on the manifold $ \cM_r $.
Note, however, that $C$ is the inverse of a quantity that resembles the sign average in a phase-quenched simulation, with
the difference that here we consider $|\rho_r|$ instead of $|\rho |$.
$C$ might thus become large in the case of lattice models with
a strong sign problem.
We mention, however, that one may explicitly introduce the phase $\varphi_r'$ of $\rho_r'=\vert\rho_r'\vert e^{i\varphi_r'}$ to write
\begin{equation}
    \frac{1}{\left\langle\frac{\rho'(\x,\g)}{\vert\rho_r'(\x,\g)\vert}\right\rangle_{\vert\rho_r'\vert}} = \frac{1}{\left\langle e^{i\varphi_r'(\x,\g)}w(\x)\right\rangle_{\vert\rho_r'\vert}}\;.
\end{equation}
The appearance of $w$ as a damping factor might be beneficial in practical computations.
\section{Summary \& conclusions}\label{sec:summary}
    In this work, we have derived a criterion for correctness of
probability measures $dP$ on complex manifolds $\cM_c$, given a complex density $\rho$ on a real manifold $\cM_r$. In particular,
we have found necessary and sufficient conditions that ensure the 
correctness of $dP$, i.e., 
\begin{equation}\label{eq:correctness_summary}
    \int_{\cM_c}dP f(\z,\g_c) = \int_{\cM_r}d\mu\,\rho(\x,\g) f(\x,\g)\;,
\end{equation}
for holomorphic $f$ within a certain algebra of observables $\cH$, defined in \eqref{eq:H}. Here, we assume $\cM_r$ to be of the product form \eqref{eq:M_r}, with $d\mu$ denoting a suitable integration measure, and $\cM_c$ to be its complexification.

More precisely, the necessary and sufficient conditions are on the
one hand the validity of the Schwinger--Dyson equations \eqref{sde1} and \eqref{sde2}, and on the other hand the validity of the bounds \eqref{bounds} for all $f\in\cH$, which need to hold simultaneously. These results were derived using standard theorems of functional analysis, assuming only that $\rho$ is holomorphic and normalizable and that the partial derivatives of $\log\rho$ (with respect to both $\x$ and $\g$) grow at most like a polynomial on $\cM_r$. None of these assumptions is particularly restrictive in the view of usual theories of interest in physics.  Moreover, we have required $\cM_r$ to be free of any zeros of $\rho$, which, however, can always be achieved by a continuous deformation of $\cM_r$. In particular, the presence of a fermionic determinant in the density, as it appears in the path integral treatment of quantum field theories with fermions, does not spoil the proof. Finally, the quantities $w$ and $\rho_r$, defined in \eqref{eq:rho_split}, are to be chosen such that the former obeys \eqref{eq:w_decay} and that the constant $C$ in \eqref{integrable} as well as the norms $\Vert f\Vert_w$ in \eqref{norms} (for all $f\in\cH$) are finite. 

We should mention that there is a lot of freedom 
in the definition of our criteria: On the one hand,  
there is 
freedom in the choices of $w$, $\rho_r$, and $\cM_r$, each different choice giving rise to different bounds in general. Since different manifolds $\cM_r$ that are equivalent via Cauchy's theorem produce different bounds, one may even attempt to take the minimum of \eqref{bounds} over all equivalent $\cM_r$ to find the lowest bound for given $w$, $\rho_r$, and $f$. On the other hand, one may define $\Vert f\Vert_w$ and $C$ using different norms than those in \eqref{norms} and \eqref{integrable}, see, e.g., \eqref{eq:f_norm_general} and \eqref{eq:C_general}.
Any violation of the bounds for any valid choice of $\cM_r$, $w$, $\rho_r$, $f$, as well as of the norms implies that \eqref{eq:correctness_summary} cannot be true.

We have then proceeded to test our chosen criterion in the context of complex Langevin simulations. The complex Langevin expectation values are naturally defined via a probability measure $dP$, obtained from the stationary solution of the real Fokker--Planck equation, hence making them amenable to such an investigation. Indeed, we have found for simple one- and two-dimensional toy models, for which exact results are readily available for comparison, that the criterion is capable of detecting incorrect convergence where conventional correctness criteria, such as boundary-term analyses, fail. In particular, this is the case when the complex Langevin expectation values satisfy the SDEs but receive contributions from unwanted integration cycles $\gamma_i\neq\cM_r$. To the best of our knowledge, there are no other correctness criteria for complex Langevin available that would be sensitive to different integration cycles. Whether unwanted integration cycles play a role in realistic lattice models or even gauge theories remains an open question. A few exploratory results in this direction can be found in \cite{GP09,HMS25}.

Finally, we emphasize once again that our criterion is not suitable for proving that a given simulation produces correct results, as this would require checking an infinite amount of bounds \eqref{bounds}. What we have demonstrated, however, is that it is capable of ruling out correctness for suitable choices of $f$ without the need of computing exact results for observables. How to find such $f$ for a given theory is a nontrivial question, to which we cannot yet provide a definite answer.

\section*{Acknowledgments}
The authors thank Lorenzo Salcedo for useful comments on the manuscript.
This research was funded in whole, or in part, by the Austrian Science Fund (FWF) [10.55776/P36875]. The numerical simulations, from which the results discussed in this work have been computed, were performed on the computing cluster of the University of Graz (GSC). The data analyses are based on 
the \verb|Python| scientific computing ecosystem \cite{python}, in particular via
the packages \cite{numpy,pandas,matplotlib} and we are also grateful for the 
creation and maintenance of the dependencies of the latter. Moreover, auxiliary numerical calculations were 
performed with the \verb|Wolfram Mathematica| computer algebra system \cite{mathematica}.

\section*{Open Access Statement}
For the purpose of open access, the author has applied a CC BY public copyright licence to any Author Accepted Manuscript version arising from this submission.

\section*{Data Availability Statement}
The data set underlying this work is freely available \cite{data} and the analysis scripts
used are published online as well \cite{scripts}. Our simulation code is available upon
request.

\appendix

\section{Proof of continuity of \texorpdfstring{$w$}{w}}\label{app:continuity}
    Here, we prove the continuity of $w$ as defined in \eqref{w_choice}.

\begin{proof}
It suffices to show continuity of $\rho_m(\x)=\sup_{\g\in G^m} |\rho(\x,\g)|$. Because $G^m$ is 
compact and $\rho$ is continuous, the supremum is attained at some $\g_x\in G^m$ for every $\x$, i.e., 
\be
\rho_m(\x)= \vert\rho(\x,\g_x)\vert\;.
\ee
We have to show that for any $\x\in \R^n$ and any $\epsilon>0$ there is a $\delta(\epsilon)>0$ such that
\be
|\rho_m(\x)-\rho_m(\y)|<\epsilon \quad \forall \y\quad {\rm with}\quad |\x-\y|< \delta(\epsilon)\;.
\label{contin}
\ee
According to \cref{lemma:1} below, $|\rho(\x,\g)|$ is continuous in $\x$, uniformly in
$\g\in G^m$, i.e., for any $\epsilon>0$ there is a $\delta(\epsilon)>0$ such that
\be
|\rho(\x,\g_{\x})-\rho(\y,\g_{\x})|<\epsilon \quad \forall \y\quad {\rm with}\quad |\x-\y|<
\delta(\epsilon)\phantom{\;.}
\ee
and
\be
|\rho(\x,\g_{\y})-\rho(\y,\g_{\y})|<\epsilon \quad \forall \y\quad {\rm with}\quad |\x-\y|<
\delta(\epsilon)\;.
\ee
Thus
\bea
\rho_m(\x)&=|\rho(\x,\g_{\x})|<|\rho(\y,\g_{\x})|+\epsilon\le \rho_m(\y)+\epsilon \notag\\
&<|\rho(\x,\g_{\y})|+2\epsilon\le \rho_m(\x)+2\epsilon\;,
\eea
provided $|\x-\y|< \delta(\epsilon)$.
This implies
\be
\rho_m(\x)-\epsilon<\rho_m(\y)<\rho_m(\x)+\epsilon\;,
\ee
which shows the continuity of $w$ defined by \eqref{w_choice}.
\end{proof}

In the proof we used the
\begin{lemma}\label{lemma:1}
$\rho(\x,\g) $ is continuous in $\x$, uniformly in $\g$. This means that for any $\epsilon>0$
there is a $\delta(\epsilon)>0$, independent of $\g$, such that
\be
|\rho(\x,\g)-\rho(\y,\g)|<\epsilon \quad \forall \y\quad {\rm with}\quad |\x-\y|<
\delta(\epsilon)\;.
\ee
\begin{proof}
Since we only need uniformity in $\g$, we use the fact that $\x$ always lies in the interior of a compact set
$B\subset \R^n$. Then the lemma follows from the standard theorem that a continuous function on a compact set is uniformly continuous; see, e.g., \cite{Rud76} (Theorem 4.19). The fact that 
$\delta(\epsilon)$ may depend on $B$ is of no consequence.
\end{proof}
\end{lemma}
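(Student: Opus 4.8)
The plan is to reduce the statement to the Heine--Cantor theorem on uniform continuity by exploiting the compactness of the group factor $G^m$. Since the lemma requires uniformity only in $\g$ and permits $\delta$ to depend on the point $\x$, I would begin by fixing $\x\in\R^n$ and enclosing it in a closed ball $B\subset\R^n$ centred at $\x$, chosen so that $\x$ lies in the interior of $B$. The natural domain to work on is then the product $B\times G^m$.

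The decisive observation is that $B\times G^m$ is compact: $B$ is closed and bounded in $\R^n$, hence compact by Heine--Borel; $G^m$ is compact, being a finite product of copies of the compact Lie group $G$; and a finite product of compact spaces is compact. Equipping $\cM_r=\R^n\times G^m$ with the natural product metric, the density $\rho$ is \emph{jointly} continuous on $\cM_r$ and therefore continuous on the compact set $B\times G^m$.

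I would then invoke the standard theorem that a continuous function on a compact metric space is uniformly continuous \cite{Rud76} (Theorem 4.19). This produces, for each $\epsilon>0$, a single $\delta(\epsilon)>0$ such that any two points of $B\times G^m$ at product distance below $\delta(\epsilon)$ have $\rho$-values differing by less than $\epsilon$. Specialising to pairs $(\x,\g)$ and $(\y,\g)$ carrying the \emph{same} group argument $\g$ collapses the product distance to $|\x-\y|$, so that $|\x-\y|<\delta(\epsilon)$ forces $|\rho(\x,\g)-\rho(\y,\g)|<\epsilon$ with $\delta(\epsilon)$ manifestly independent of $\g$. Keeping $\delta(\epsilon)$ small enough that every such $\y$ still lies in $B$---which is exactly why $\x$ was placed in the interior---yields the assertion.

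I do not expect a genuine obstacle here, as the argument is purely a compactness one; the points needing the most care are rather bookkeeping. First, $\rho$ must be used as a \emph{jointly} continuous function of $(\x,\g)$, not merely as a function continuous in each variable separately, for the Heine--Cantor theorem to apply on the product. Second, one must be content with a $\delta$ that may depend on $B$ (and hence implicitly on $\x$); this is harmless, since the lemma demands independence from $\g$ alone, which is all that is needed for its use in \cref{app:continuity}.
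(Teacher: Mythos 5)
Your proposal is correct and follows essentially the same route as the paper: both arguments place $\x$ in the interior of a compact set $B$ and invoke the uniform continuity of a continuous function on a compact set (\cite{Rud76}, Theorem 4.19), with a $\delta$ depending on $B$ but not on $\g$. Your version merely spells out the details the paper leaves implicit---the compactness of the product $B\times G^m$, the joint continuity of $\rho$, and the specialisation to pairs sharing the same group argument---which is a faithful elaboration rather than a different proof.
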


\section{Approximation in the \texorpdfstring{$w$}{w} norm}\label{app:closure}
    In this appendix, we demonstrate that the functions $f_\bk$ defined in \eqref{functionform} are contained in the closure $\overline{\cA}$ of $\cA$ with respect to the norm $\Vert\cdot\Vert_w$. 

\begin{proof}
To see this, we have to show that $\exp(\ii\bk\cdot\x)$ is a limit of polynomials, which follows 
by considering the Taylor expansion
\be
\exp(\ii t\bk\cdot\x)=\sum_{n=0}^{N} \frac{(\ii t\bk\cdot\x)^n}{n!}+R_{N}  
\ee
with $t\in\R$ and a remainder $R_{N}$. By using \eqref{eq:w_decay}, one can show that $\Vert R_{N}\Vert_w$ satisfies 
\be
\Vert R_{N}\Vert_w\le \frac{B}{(N+1)!}e^
{-\frac{N+1}{2}} \left(\frac{N+1}{2b}\right)^{\frac{N+1}{2}}|\bk|^{N+1}\;.
\label{taylorbound}
\ee
Here, 
$B$ and $b$ are the constants defined in \eqref{eq:w_decay} and we have
used the general bound on the remainder in Taylor's theorem for a function $f(t)$, given by
\be
|R_{N}|\le M \frac{\left\vert t^{N+1}\right\vert}{(N+1)!}\;,
\ee
provided
\be
\left\vert f^{(N+1)}(\tau)\right\vert\le M\;\quad{\rm for}\;\quad \vert\tau\vert\le\vert t\vert\;. 
\ee
In our case, we have $M=\vert\x\vert^{N+1}\vert\bk\vert^{N+1}$ and $t=1$.
Since $\lim_{N\to\infty}\Vert R_N\Vert_w =0$, this shows that we obtain an arbitrarily good polynomial  approximation in the $\Vert\cdot\Vert_w$ norm for functions of the form \eqref{functionform}. 
\qedhere
\end{proof}

\section{Boundary terms, convergence conditions, and Schwinger--Dyson equations}\label{app:cc_sde}
    Here, we collect a number of facts regarding the relation between vanishing boundary
terms, the validity of the SDEs and the validity of the so-called 
convergence conditions (CCs) \cite{AJS11}. The latter, also known as consistency conditions, are an
expression of the fact that the complex Langevin evolution has reached equilibrium. More precisely, we introduce the following

{\bf Definition:} A linear functional $T_{CC}$ on $\cH$ is said to satisfy the CCs if
\be
(T_{CC},L_{c}f)=0\quad\forall f\in \cH \;,
\label{cc}
\ee
where
\be
L_c\equiv\sum_{i=1}^n A_i\partial_i+\sum_{j=1}^d\sum_{k=1}^m A_{j,k}D_{j,k}
\ee
is the so-called complex Langevin operator (in the second term it was assumed that the group
derivations are with respect to a basis of the Lie algebra which is  orthogonal with respect to
the Killing form) and the $A_i$ and $A_{j,k}$ are the SD operators defined in \eqref{eq:A_i} and \eqref{eq:A_jk}, respectively.

If there are no boundary terms, the CCs \eqref{cc} follow from the statement that the measure $dP$ on $\cM_c$ satisfies the time-independent
Fokker--Planck equation. The converse, however, is not necessarily true, as
there are examples in which the CCs hold for some observables even though they come with nonzero boundary terms. 

If the SDEs are satisfied, the CCs follow automatically, as is trivially shown \cite{AJS11}. Moreover, in \cite{AJS11} it was also argued that the converse is true as well. Unfortunately,
the argument given there is incorrect. Thus, in
the following, we derive a sufficient condition under which the CCs do imply
the SDEs, using the methods employed in the main text. 

{\bf Proposition:} Let $dP$ be a probability measure on $\cM_c$. Assume that $\rho$ does not
vanish on $\cM_r$ and that the derivatives of $\log\rho$ grow at most polynomially in $\vert\x\vert$, that $dP$ obeys the bounds \eqref{bounds}, and that the linear functional on $\cH$ generated by $dP$ satisfies the CCs \eqref{cc}.
Then, for all $f\in\cH$,
\be
\int_{\cM_c}f dP= \int_{\cM_r} f(\x)w(\x)
\tilde\sigma_{CC}(\x,\g) d\mu\;,
\ee
where $\tilde\sigma_{CC}$ is a complex measure  on 
$\cM_r$ of finite total variation. Furthermore, defining 
\be
\sigma_{CC}(\x,\g)\equiv w(\x)\tilde\sigma_{CC}(\x,\g)\;,
\ee
$\sigma_{CC}$ satisfies
\be
L_c^T \sigma_{CC}=0\;,
\label{cfp}
\ee
where $L_c^T$ (the complex Fokker--Planck operator) is the transpose of $L_c$,
\be
L_c^T= -\sum_{i=1}^n \partial_iA_i^T -\sum_{j=1}^d\sum_{k=1}^m D_{j,k}A_{j,k}^T\;.
\ee
\begin{proof}
The proof is a straightforward imitation of the proofs of \cref{prop:1,prop:2} in the main text.
\qedhere \end{proof}

As with the SDEs, \eqref{cfp} is solved by
\be
\sigma_{CC}(\x,\g)= c\rho(\x,\g)\,,\quad c\in\C\,,
\label{sol_cfp}
\ee
where $c=1$ due to the normalization of $P$. Crucially, this solution also obeys the SDEs. If the eigenvalue $0$ of $L_c^T$ is nondegenerate, the SDEs are therefore implied by the CCs. This means that we can formulate the following

{\bf Corollary:} If a bounded linear functional $T_{CC}$ on $\cH$, generated by a probability density $dP$ that satifies the bounds
\eqref{bounds}, obeys the CCs \eqref{cc}, and if the equation \eqref{cfp} has only the
solution \eqref{sol_cfp}, then $T_{CC}$ also obeys the SDEs.

{\bf Remark:} While we cannot rule out that the solution \eqref{sol_cfp} is not unique, this
would mean degeneracy of the zero eigenvalue of $L_c^T$ restricted to the real integration
manifold $\cM_r$. This could not be the case generically; it might at worst happen for exceptional parameter values.
The only cases in which such a degeneracy is known to occur,
are those in which $\rho$ vanishes somewhere on $\cM_r$,
leading to nonergodicity; but we have excluded this possibility.

To summarize, on the one hand, if the complex Langevin process relaxes to an equilibrium distribution for which there are no boundary terms, the CCs hold. 
If in addition the bounds \eqref{bounds} are satisfied and  
the zero eigenvalue of $L_c^T$ is nondegenerate, the SDEs hold. On the other hand, the validity of the SDEs always implies the CCs. For all we know, neither the CCs nor the SDEs can guarantee the absence of boundary terms, nor does the absence of boundary terms automatically imply the SDEs. However, at the time of writing, we have never encountered any counterexamples to either of these scenarios.

\bibliographystyle{JHEP}
\bibliography{bibliography}

\providecommand{\href}[2]{#2}\begingroup\raggedright\begin{thebibliography}{10}

\bibitem{Par83}
G.~Parisi, \emph{{On complex probabilities}},
  \href{https://doi.org/10.1016/0370-2693(83)90525-7}{\emph{Phys. Lett. B}
  {\bfseries 131} (1983) 393}.

\bibitem{Kla83}
J.R.~Klauder, \emph{{A Langevin approach to fermion and quantum spin
  correlation functions}},
  \href{https://doi.org/10.1088/0305-4470/16/10/001}{\emph{J. Phys. A}
  {\bfseries 16} (1983) L317}.

\bibitem{Sal97}
L.L.~Salcedo, \emph{{Representation of complex probabilities}},
  \href{https://doi.org/10.1063/1.531906}{\emph{J. Math. Phys.} {\bfseries 38}
  (1997) 1710} [\href{https://arxiv.org/abs/hep-lat/9607044}{{\ttfamily
  hep-lat/9607044}}].

\bibitem{Wei02}
D.~Weingarten, \emph{{Complex Probabilities on $R^N$ as Real Probabilities on
  $C^N$ and an Application to Path Integrals}},
  \href{https://doi.org/10.1103/PhysRevLett.89.240201}{\emph{Phys. Rev. Lett.}
  {\bfseries 89} (2002) 240201}
  [\href{https://arxiv.org/abs/quant-ph/0210195}{{\ttfamily
  quant-ph/0210195}}].

\bibitem{Sal07}
L.L.~Salcedo, \emph{{The existence of positive representations for complex
  weights}}, \href{https://doi.org/10.1088/1751-8113/40/31/016}{\emph{J. Phys.
  A} {\bfseries 40} (2007) 9399}
  [\href{https://arxiv.org/abs/0706.4359}{{\ttfamily 0706.4359}}].

\bibitem{Wos15}
J.~Wosiek, \emph{{Beyond complex Langevin equations: from simple examples to
  positive representation of Feynman path integrals directly in the Minkowski
  time}}, \href{https://doi.org/10.1007/JHEP04(2016)146}{\emph{JHEP} {\bfseries
  2016} (2016) 146}.

\bibitem{SW17}
E.~Seiler and J.~Wosiek, \emph{{Positive representations of a class of complex
  measures}}, \href{https://doi.org/10.1088/1751-8121/aa9310}{\emph{J. Phys. A}
  {\bfseries 50} (2017) 495403}
  [\href{https://arxiv.org/abs/1702.06012}{{\ttfamily 1702.06012}}].

\bibitem{SS19}
L.L.~Salcedo and E.~Seiler, \emph{{Schwinger--Dyson equations and line
  integrals}}, \href{https://doi.org/10.1088/1751-8121/aaefca}{\emph{J. Phys.
  A} {\bfseries 52} (2019) 035201}
  [\href{https://arxiv.org/abs/1809.06888}{{\ttfamily 1809.06888}}].

\bibitem{Fol99}
G.B.~Folland, \emph{{Real Analysis, Modern Techniques and Their Applications}},
  John Wiley and sons, New York (1999).

\bibitem{Rud91}
W.~Rudin, \emph{{Functional Analysis}}, MacGraw-Hill, Singapore (1991).

\bibitem{Rud87}
W.~Rudin, \emph{{Real and Complex Analysis}}, MacGraw-Hill, Singapore (1987).

\bibitem{RS72}
M.~Reed and B.~Simon, \emph{{Methods of Mathematical Physics}, vol. I},
  Academic Press, New York (1972).

\bibitem{Bum13}
D.~Bump, \emph{{Lie Groups}}, Springer, New York etc. (2013).

\bibitem{Kna86}
A.W.~Knapp, \emph{Representation Theory of Semisimple Groups}, Princeton
  University Press, Princeton (1986).

\bibitem{TW05}
M.~Troyer and U.-J.~Wiese, \emph{{Computational Complexity and Fundamental
  Limitations to Fermionic Quantum Monte Carlo Simulations}},
  \href{https://doi.org/10.1103/PhysRevLett.94.170201}{\emph{Phys. Rev. Lett.}
  {\bfseries 94} (2005) 170201}
  [\href{https://arxiv.org/abs/cond-mat/0408370}{{\ttfamily
  cond-mat/0408370}}].

\bibitem{For10pr}
P.~de~Forcrand, \emph{{Simulating QCD at finite density}},
  \href{https://doi.org/10.22323/1.091.0010}{\emph{PoS (LAT2009)} (2010) 010}
  [\href{https://arxiv.org/abs/1005.0539}{{\ttfamily 1005.0539}}].

\bibitem{GL16}
C.~Gattringer and K.~Langfeld, \emph{{Approaches to the sign problem in lattice
  field theory}}, \href{https://doi.org/10.1142/S0217751X16430077}{\emph{Int.
  J. Mod. Phys. A} {\bfseries 31} (2016) 1643007}
  [\href{https://arxiv.org/abs/1603.09517}{{\ttfamily 1603.09517}}].

\bibitem{BRL21}
C.E.~Berger, L.~Rammelm{\"u}ller, A.C.~Loheac, F.~Ehmann, J.~Braun and
  J.E.~Drut, \emph{{Complex Langevin and other approaches to the sign problem
  in quantum many-body physics}},
  \href{https://doi.org/10.1016/j.physrep.2020.09.002}{\emph{Phys. Rep.}
  {\bfseries 892} (2021) 1} [\href{https://arxiv.org/abs/1907.10183}{{\ttfamily
  1907.10183}}].

\bibitem{PW81}
G.~Parisi and Y.-S.~Wu, \emph{{Perturbation theory without gauge fixing}},
  {\emph{Sci. Sin.} {\bfseries 24} (1981) 483}.

\bibitem{BKK85}
G.G.~Batrouni, G.R.~Katz, A.S.~Kronfeld, G.P.~Lepage, B.~Svetitsky and
  K.G.~Wilson, \emph{{Langevin simulations of lattice field theories}},
  \href{https://doi.org/10.1103/PhysRevD.32.2736}{\emph{Phys. Rev. D}
  {\bfseries 32} (1985) 2736}.

\bibitem{Sal93}
L.L.~Salcedo, \emph{{Spurious solutions of the complex Langevin equation}},
  \href{https://doi.org/10.1016/0370-2693(93)91116-5}{\emph{Phys. Lett. B}
  {\bfseries 305} (1993) 125}.

\bibitem{GL93}
H.~Gausterer and S.~Lee, \emph{{The Mechanism of Complex Langevin
  Simulations}}, \href{https://doi.org/10.1007/BF01052754}{\emph{J. Stat.
  Phys.} {\bfseries 73} (1993) 147}
  [\href{https://arxiv.org/abs/hep-lat/9211050}{{\ttfamily hep-lat/9211050}}].

\bibitem{FOS94}
K.~Fujimura, K.~Okano, L.~Sch{\"u}lke, K.~Yamagishi and B.~Zheng, \emph{{On the
  segregation phenomenon in complex Langevin simulation}},
  \href{https://doi.org/10.1016/0550-3213(94)90413-8}{\emph{Nucl. Phys. B}
  {\bfseries 424} (1994) 675}
  [\href{https://arxiv.org/abs/hep-th/9311174}{{\ttfamily hep-th/9311174}}].

\bibitem{Gau98}
H.~Gausterer, \emph{{Complex Langevin: A numerical method?}},
  \href{https://doi.org/10.1016/S0375-9474(98)00522-3}{\emph{Nucl. Phys. A}
  {\bfseries 642} (1998) c239}.

\bibitem{ASS10}
G.~Aarts, E.~Seiler and I.-O.~Stamatescu, \emph{{Complex Langevin method: When
  can it be trusted?}},
  \href{https://doi.org/10.1103/PhysRevD.81.054508}{\emph{Phys. Rev. D}
  {\bfseries 81} (2010) 054508}
  [\href{https://arxiv.org/abs/0912.3360}{{\ttfamily 0912.3360}}].

\bibitem{AJS10}
G.~Aarts, F.A.~James, E.~Seiler and I.-O.~Stamatescu, \emph{{Adaptive stepsize
  and instabilities in complex Langevin dynamics}},
  \href{https://doi.org/10.1016/j.physletb.2010.03.012}{\emph{Phys. Lett. B}
  {\bfseries 687} (2010) 154}
  [\href{https://arxiv.org/abs/0912.0617}{{\ttfamily 0912.0617}}].

\bibitem{AJ10}
G.~Aarts and F.A.~James, \emph{{On the convergence of complex Langevin
  dynamics: the three-dimensional XY model at finite chemical potential}},
  \href{https://doi.org/10.1007/JHEP08(2010)020}{\emph{JHEP} {\bfseries 2010}
  (2010) 020} [\href{https://arxiv.org/abs/1005.3468}{{\ttfamily 1005.3468}}].

\bibitem{AJS11}
G.~Aarts, F.A.~James, E.~Seiler and I.-O.~Stamatescu, \emph{{Complex Langevin:
  etiology and diagnostics of its main problem}},
  \href{https://doi.org/10.1140/epjc/s10052-011-1756-5}{\emph{Eur. Phys. J. C}
  {\bfseries 71} (2011) 1756}
  [\href{https://arxiv.org/abs/1101.3270}{{\ttfamily 1101.3270}}].

\bibitem{AJP13}
G.~Aarts, F.A.~James, J.M.~Pawlowski, E.~Seiler, D.~Sexty and I.-O.~Stamatescu,
  \emph{{Stability of complex Langevin dynamics in effective models}},
  \href{https://doi.org/10.1007/JHEP03(2013)073}{\emph{JHEP} {\bfseries 2013}
  (2013) 073} [\href{https://arxiv.org/abs/1212.5231}{{\ttfamily 1212.5231}}].

\bibitem{AGS13}
G.~Aarts, P.~Giudice and E.~Seiler, \emph{{Localised distributions and criteria
  for correctness in complex Langevin dynamics}},
  \href{https://doi.org/10.1016/j.aop.2013.06.019}{\emph{Ann. Phys.} {\bfseries
  337} (2013) 238} [\href{https://arxiv.org/abs/1306.3075}{{\ttfamily
  1306.3075}}].

\bibitem{NS15}
J.~Nishimura and S.~Shimasaki, \emph{{New insights into the problem with a
  singular drift term in the complex Langevin method}},
  \href{https://doi.org/10.1103/PhysRevD.92.011501}{\emph{Phys. Rev. D}
  {\bfseries 92} (2015) 011501(R)}
  [\href{https://arxiv.org/abs/1504.08359}{{\ttfamily 1504.08359}}].

\bibitem{Sal16}
L.L.~Salcedo, \emph{{Does the complex Langevin method give unbiased results?}},
  \href{https://doi.org/10.1103/PhysRevD.94.114505}{\emph{Phys. Rev. D}
  {\bfseries 94} (2016) 114505}
  [\href{https://arxiv.org/abs/1611.06390}{{\ttfamily 1611.06390}}].

\bibitem{NNS16}
K.~Nagata, J.~Nishimura and S.~Shimasaki, \emph{{Argument for justification of
  the complex Langevin method and the condition for correct convergence}},
  \href{https://doi.org/10.1103/PhysRevD.94.114515}{\emph{Phys. Rev. D}
  {\bfseries 94} (2016) 114515}
  [\href{https://arxiv.org/abs/1606.07627}{{\ttfamily 1606.07627}}].

\bibitem{ASS17}
G.~Aarts, E.~Seiler, D.~Sexty and I.-O.~Stamatestu, \emph{{Complex Langevin
  dynamics and zeroes of the fermion determinant}},
  \href{https://doi.org/10.1007/JHEP05(2017)044}{\emph{JHEP} {\bfseries 2017}
  (2017) 044} [\href{https://arxiv.org/abs/1701.02322}{{\ttfamily
  1701.02322}}].

\bibitem{SSS19}
M.~Scherzer, E.~Seiler, D.~Sexty and I.-O.~Stamatescu, \emph{{Complex Langevin
  and boundary terms}},
  \href{https://doi.org/10.1103/PhysRevD.99.014512}{\emph{Phys. Rev. D}
  {\bfseries 99} (2019) 014512}
  [\href{https://arxiv.org/abs/1808.05187}{{\ttfamily 1808.05187}}].

\bibitem{SSS20}
M.~Scherzer, E.~Seiler, D.~Sexty and I.-O.~Stamatescu, \emph{{Controlling
  complex Langevin simulations of lattice models by boundary term analysis}},
  \href{https://doi.org/10.1103/PhysRevD.101.014501}{\emph{Phys. Rev. D}
  {\bfseries 101} (2020) 014501}
  [\href{https://arxiv.org/abs/1910.09427}{{\ttfamily 1910.09427}}].

\bibitem{Sei20}
E.~Seiler, \emph{{Complex Langevin: Boundary terms at poles}},
  \href{https://doi.org/10.1103/PhysRevD.102.094507}{\emph{Phys. Rev. D}
  {\bfseries 102} (2020) 094507}
  [\href{https://arxiv.org/abs/2006.04714}{{\ttfamily 2006.04714}}].

\bibitem{SSS24}
E.~Seiler, D.~Sexty and I.-O.~Stamatescu, \emph{{Complex Langevin: Correctness
  criteria, boundary terms, and spectrum}},
  \href{https://doi.org/10.1103/PhysRevD.109.014509}{\emph{Phys. Rev. D}
  {\bfseries 109} (2024) 014509}
  [\href{https://arxiv.org/abs/2304.00563}{{\ttfamily 2304.00563}}].

\bibitem{BHM25}
K.~Boguslavski, P.~Hotzy and D.I.~M{\"u}ller, \emph{{Lefschetz thimble-inspired
  weight regularizations for complex Langevin simulations}},
  \href{https://doi.org/10.21468/SciPostPhys.18.3.092}{\emph{SciPost Phys.}
  {\bfseries 18} (2025) 092}
  [\href{https://arxiv.org/abs/2412.02396}{{\ttfamily 2412.02396}}].

\bibitem{ALR21}
D.~Alvestad, R.~Larsen and A.~Rothkopf, \emph{{Stable solvers for real-time
  Complex Langevin}},
  \href{https://doi.org/10.1007/JHEP08(2021)138}{\emph{JHEP} {\bfseries 2021}
  (2021) 138} [\href{https://arxiv.org/abs/2105.02735}{{\ttfamily
  2105.02735}}].

\bibitem{SSS13}
E.~Seiler, D.~Sexty and I.-O.~Stamatescu, \emph{{Gauge cooling in complex
  Langevin for lattice QCD with heavy quarks}},
  \href{https://doi.org/10.1016/j.physletb.2013.04.062}{\emph{Phys. Lett. B}
  {\bfseries 723} (2013) 213}
  [\href{https://arxiv.org/abs/1211.3709}{{\ttfamily 1211.3709}}].

\bibitem{FOS86}
J.~Flower, S.W.~Otto and S.~Callahan, \emph{{Complex Langevin equations and
  lattice gauge theory}},
  \href{https://doi.org/10.1103/PhysRevD.34.598}{\emph{Phys. Rev. D} {\bfseries
  34} (1986) 598}.

\bibitem{HMS25}
M.W.~Hansen, M.~Mandl, E.~Seiler and D.~Sexty, \emph{{Role of integration
  cycles in complex Langevin simulations}},
  \href{https://doi.org/10.1103/PhysRevD.111.074502}{\emph{Phys. Rev. D}
  {\bfseries 111} (2025) 074502}
  [\href{https://arxiv.org/abs/2412.17137}{{\ttfamily 2412.17137}}].

\bibitem{Wit11}
E.~Witten, \emph{{Analytic Continuation Of Chern-Simons Theory}}, {\emph{AMS/IP
  Stud. Adv. Math.} {\bfseries 50} (2011) 347}
  [\href{https://arxiv.org/abs/1001.2933}{{\ttfamily 1001.2933}}].

\bibitem{FAIR16}
\relax M. D. Wilkinson~et al., \emph{{The FAIR Guiding Principles for
  scientific data management and stewardship}},
  \href{https://doi.org/10.1038/sdata.2016.18}{\emph{Sci. Data} {\bfseries 3}
  (2016) 160018}.

\bibitem{data}
M.~Mandl, E.~Seiler and D.~Sexty, \emph{Dataset for "{Necessary and sufficient
  conditions for correctness of complex Langevin}"},
  \href{https://doi.org/10.5281/zenodo.16874951}{\emph{Zenodo} (2025) }.

\bibitem{scripts}
M.~Mandl, E.~Seiler and D.~Sexty, \emph{Analysis and visualization code for
  "{Necessary and sufficient conditions for correctness of complex Langevin}"},
  \href{https://doi.org/10.5281/zenodo.16874895}{\emph{Zenodo} (2025) }.

\bibitem{OOS89}
H.~Okamoto, K.~Okano, L.~Sch{\"u}lke and S.~Tanaka, \emph{{The role of a kernel
  in complex langevin systems}},
  \href{https://doi.org/10.1016/0550-3213(89)90526-9}{\emph{Nucl. Phys. B}
  {\bfseries 324} (1989) 684}.

\bibitem{MHS24p}
M.~Mandl, M.W.~Hansen, E.~Seiler and D.~Sexty, \emph{{Kernels and integration
  cycles in complex Langevin simulations}},
  \href{https://doi.org/10.22323/1.466.0059}{\emph{PoS (LATTICE2024)} (2024)
  059} [\href{https://arxiv.org/abs/2410.13423}{{\ttfamily 2410.13423}}].

\bibitem{ALR23}
D.~Alvestad, R.~Larsen and A.~Rothkopf, \emph{{Towards learning optimized
  kernels for complex Langevin}},
  \href{https://doi.org/10.1007/JHEP04(2023)057}{\emph{JHEP} {\bfseries 2023}
  (2023) 057} [\href{https://arxiv.org/abs/2211.15625}{{\ttfamily
  2211.15625}}].

\bibitem{GP09}
G.~Guralnik and C.~Pehlevan, \emph{{Complex Langevin equations and
  Schwinger--Dyson equations}},
  \href{https://doi.org/10.1016/j.nuclphysb.2008.11.034}{\emph{Nucl. Phys. B}
  {\bfseries 811} (2009) 519}
  [\href{https://arxiv.org/abs/0710.3756}{{\ttfamily 0710.3756}}].

\bibitem{python}
G.~van Rossum and J.~de~Boer, \emph{{Interactively Testing Remote Servers Using
  the Python Programming Language}}, {\emph{CWI Quarterly} {\bfseries 4} (1991)
  283}.

\bibitem{numpy}
S.~van~der Walt, S.C.~Colbert and G.~Varoquaux, \emph{{The NumPy Array: A
  Structure for Efficient Numerical Computation}},
  \href{https://doi.org/10.1109/MCSE.2011.37}{\emph{Comput. Sci. Eng.}
  {\bfseries 13} (2011) 22}.

\bibitem{pandas}
W.~McKinney, \emph{{Data Structures for Statistical Computing in Python}},
  \href{https://doi.org/10.25080/Majora-92bf1922-00a}{\emph{Proceedings of the
  9th Python in Science Conference (Scipy 2010)} (2010) 51}.

\bibitem{matplotlib}
J.D.~Hunter, \emph{{Matplotlib: A 2D Graphics Environment}},
  \href{https://doi.org/10.1109/MCSE.2007.55}{\emph{Comput. Sci. Eng.}
  {\bfseries 9} (2007) 90}.

\bibitem{mathematica}
\relax Wolfram Research{,}~Inc., ``Mathematica, {V}ersion 14.2.''

\bibitem{Rud76}
W.~Rudin, \emph{Principles of Mathematical Analysis}, MacGraw-Hill, Singapore
  (1976).

\end{thebibliography}\endgroup

\end{document}